
\documentclass[11pt]{article}
\usepackage{graphicx,amssymb,amsmath}
\usepackage{epsf}
\usepackage{amsfonts}
\usepackage{latexsym}
\usepackage{amssymb}
\usepackage{amsthm}
\usepackage{graphics}
\usepackage{psfrag}
\usepackage{float}
\usepackage{bbm}
\usepackage[small]{caption}

\usepackage{paralist}%
\usepackage{xspace}%
\usepackage{xcolor}%
\usepackage{euscript}%

\setlength{\oddsidemargin}{0in}
\setlength{\evensidemargin}{0in}
\setlength{\topmargin}{0in}
\setlength{\headheight}{0in}
\setlength{\headsep}{0in}
\setlength{\textwidth}{6.5in}
\setlength{\textheight}{9in}

\newtheorem{theorem}{Theorem}
\newtheorem{lemma}{Lemma}

\newcommand{\old}[1]{{}}
\newcommand{\later}[1]{{}}

\def\etal{{et~al.}}

\def\ie{{i.e.}}

\newcommand{\alg}{{\rm ALG}}
\newcommand{\opt}{{\rm OPT}}

\newcommand{\eps}{\varepsilon}
\newcommand{\NN}{\mathbb{N}}
\newcommand{\ZZ}{\mathbb{Z}}
\newcommand{\RR}{\mathbb{R}}

\newcommand{\conv}{{\rm conv}}

\newcommand{\PntSet}{S}%
\newcommand{\PntSetA}{Q}%
\newcommand{\Body}{C}%
\newcommand{\Copt}{C_{\mathrm{opt}}}%
\newcommand{\VolX}[1]{\mathrm{vol}\pth{#1}}%

\newcommand{\pnt}{p}%

\newcommand{\pth}[2][\!]{#1\left({#2}\right)}
\newcommand{\ceil}[1]{\left\lceil {#1} \right\rceil}

\newcommand{\lemlab}[1]{\label{lemma:#1}}
\newcommand{\lemref}[1]{Lemma~\ref{lemma:#1}}

\newcommand{\seclab}[1]{\label{sec:#1}}
\newcommand{\secref}[1]{Section~\ref{sec:#1}}

\newcommand{\Grid}{\mathcal{G}}
\newcommand{\GridX}[1]{\Grid\pth{#1}}
\newcommand{\brc}[1]{\left\{ {#1} \right\}}
\newcommand{\sep}[1]{\,\left|\, {#1} \MakeBig\right.}
\newcommand{\MakeBig}{\rule[-.2cm]{0cm}{0.4cm}}

\definecolor{blue25}{rgb}{0,0,0.95}

\newcommand{\bd}{{\partial}}
\newcommand{\cardin}[1]{\left|{#1}\right|}%

\newcommand{\thmlab}[1]{{\label{theo:#1}}}
\newcommand{\thmref}[1]{Theorem~\ref{theo:#1}}

\newcommand{\figlab}[1]{\label{fig:#1}}
\newcommand{\figref}[1]{Figure~\ref{fig:#1}}


\title{Minimum Convex Partitions and Maximum Empty Polytopes\footnote{A
preliminary version of this paper appeared in the {\em Proceedings of
the 13th Scandinavian Symposium and Workshops on Algorithm Theory,
Helsinki, Finland, July, 2012.}}}

\author{%
   Adrian Dumitrescu%
   \thanks{Department of Computer Science, University of
      Wisconsin--Milwaukee, WI 53201-0784, USA\@.
      Email:~\texttt{dumitres@uwm.edu}.  Supported in part by NSF
      grant DMS-1001667.}%
   \and%
   Sariel Har-Peled%
   \thanks{Department of Computer Science, University of Illinois at
      Urbana--Champaign, Urbana, IL 61801-2302, USA\@.
      Email:~\texttt{sariel@cs.uiuc.edu}. Work on this paper was
      partially supported by a NSF AF award CCF-0915984.}%
   \and%
   Csaba D. T\'oth\thanks{Department of Mathematics,
      California State University, Northridge, Los Angeles, CA; and
      Department of Computer Science, Tufts University, Medford, MA, USA\@.
      Email:~\texttt{cdtoth@acm.org}.  Work on this paper was
      supported in part by NSERC grant RGPIN 35586.}%
}


\begin{document}
\maketitle

\begin{abstract}
    Let $S$ be a set of $n$ points in $\RR^d$. A Steiner convex
    partition is a tiling of $\conv(S)$ with empty convex bodies.  For
    every integer $d$, we show that $S$ admits a Steiner convex
    partition with at most $\lceil (n-1)/d\rceil$ tiles. This bound is
    the best possible for points in general position in the plane, and
    it is best possible apart from constant factors in every fixed
    dimension $d\geq 3$. We also give the first constant-factor
    approximation algorithm for computing a minimum Steiner convex
    partition of a planar point set in general position.

    Establishing a tight lower bound for the maximum volume of a tile
    in a Steiner convex partition of any $n$ points in the unit cube is
    equivalent to a famous problem of Danzer and Rogers.  It is
    conjectured that the volume of the largest tile is $\omega(1/n)$.
    Here we give a $(1-\eps)$-approximation
    algorithm for computing the maximum volume of an empty convex body
    amidst $n$ given points in the $d$-dimensional unit box $[0,1]^d$.

    \medskip
    \noindent\textbf{\small Keywords}:
Steiner convex partition,
Horton set,
epsilon-net,
lattice polytope,
approximation algorithm.

\end{abstract}

\section{Introduction} \seclab{sec:intro}

Let $S$ be a set of $n\geq d+1$ points in $\RR^d$, $d \geq 2$.  A
convex body $C$ is \emph{empty} if its interior is disjoint from $S$.
A \emph{convex partition} of $S$ is a partition of the convex hull
$\conv(S)$ into empty convex bodies (called \emph{tiles}) such that
the vertices of the tiles are in $S$. In a \emph{Steiner convex
   partition} of $S$ the vertices of the tiles are arbitrary: they can
be points in $S$ or \emph{Steiner points}.  For instance, any
triangulation of $S$ is a convex partitions of $S$, where the convex
bodies are simplices, and so $\conv(S)$ can be always partitioned into
$O(n^{\lfloor d/2\rfloor})$ empty convex tiles~\cite{DRS10}.

In this paper, we study the minimum number of tiles that a Steiner
convex partition of every $n$ points in $\RR^d$ admits, and the
maximum volume of a single tile for a given point set.  The research
is motivated by a longstanding open problem by Danzer and
Rogers~\cite{ABFK92, BW71, BC87, FP94, PT12}:
What is the maximum volume of an empty convex body $C\subset [0,1]^d$
that can be found amidst any set $S\subset [0,1]^d$ of $n$ points in a
unit cube?  The current best bounds are $\Omega(1/n)$ and
$O(\log{n}/n)$, respectively (for a fixed $d$).  The lower bound,
for instance, can be deduced by decomposing the unit cube by
$n$~parallel hyperplanes, each containing at least one point, into at
most $n+1$ empty convex bodies.
The upper bound is tight apart from constant factors for $n$
randomly and uniformly distributed points in the unit cube.
It is suspected that the
largest volume is $\omega(1/n)$ in any dimension $d \geq 2$, \ie, the
ratio between this volume and $1/n$ tends to $\infty$.

For a convex body $C$ in $\RR^d$, denote by $\VolX{C}$ the Lebesgue
measure of $C$, \ie, its area when $d = 2$, or its volume when $d \ge 3$.

\paragraph{Minimum number of tiles in a convex partition.}
A {\em minimum convex partition} of $S$ is a convex
partition of $S$ with a minimum number of tiles. Denote this number by
$f_d(S)$.  Further define (by slightly abusing notation)
$$f_d(n)=\max \{f_d(S): S \subset \RR^d, |S|=n\}. $$
Similarly define a {\em minimum Steiner convex partition} of $S$ as
one with a minimum number of tiles and let $g_d(S)$ denote this
number.  We also define
$$ g_d(n)=\max \{g_d(S): S \subset \RR^d, |S|=n\}. $$

There has been substantial work on estimating $f_2(n)$, and computing
$f_2(S)$ for a given set $S$ in the plane.  It has been shown
successively that $f_2(n) \leq \frac{10n-18}{7}$ by
Neumann-Lara~\etal~\cite{NRU04}, $f_2(n) \leq \frac{15n-24}{11}$ by
Knauer and Spillner~\cite{KS06}, and $f_2(n)\leq \frac{4n-6}{3}$ for
$n\geq 6$ by Sakai and Urrutia~\cite{SU09}.  From the other direction,
Garc\'ia-L\'opez and Nicol\'as~\cite{GN06} proved that $f_2(n) \geq
\frac{12n-22}{11}$, for $n\geq 4$, thereby improving an earlier lower
bound $f_2(n) \geq n+2$ by Aichholzer and Krasser~\cite{AK01}.  Knauer
and Spillner~\cite{KS06} have also obtained a $\frac{30}{11}$-factor
approximation algorithm for computing a minimum convex partition for a
given set $S\subset \RR^2$, no three of which are collinear. There are
also a few exact algorithms, including three fixed-parameter
algorithms~\cite{FMR01,GL05,Sp08}.

The state of affairs is much different in regard to Steiner convex
partitions.  As pointed out in~\cite{DT11}, no corresponding results
are known for the variant with Steiner points. Here we take the first
steps in this direction, and obtain the following results.

\begin{theorem}%
    \thmlab{T1}%
    For $n\geq d+1$, we have $g_d(n)\leq
    \left\lceil\frac{n-1}{d}\right\rceil$.  For $d=2$, this bound is
    the best possible, that is, $g_2(n)=\lceil (n-1)/2\rceil$; and for
    every fixed $d\geq 2$, we have $g_d(n) =\Omega(n)$.
\end{theorem}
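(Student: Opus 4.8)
The plan is to establish three separate statements: the upper bound $g_d(n)\le\lceil(n-1)/d\rceil$ for all $d$, the matching lower bound $g_2(n)\ge\lceil(n-1)/2\rceil$ in the plane, and the linear lower bound $g_d(n)=\Omega(n)$ in every fixed dimension.

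For the upper bound, I would proceed by a sweep / incremental construction. Sort the points of $S=\{p_1,\dots,p_n\}$ by, say, increasing $x_1$-coordinate (breaking ties lexicographically), and process them $d$ at a time. The idea is to build the partition of $\conv(S)$ tile by tile, where each new tile ``absorbs'' $d$ new points. Concretely, maintain a convex region $R_k$ (the part of $\conv(S)$ already tiled together with a convex ``frontier''), and when the next batch of $d$ points arrives, carve off one new empty convex tile whose vertices include those $d$ points (Steiner points are used to keep the frontier convex and the tiles empty). Since we start with one point $p_1$ already ``accounted for'' and each tile consumes $d$ further points, after processing all $n$ points we have used $\lceil(n-1)/d\rceil$ tiles. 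The main obstacle here is the geometric bookkeeping: one must argue that at each step the frontier can be kept convex, that the carved-off tile is genuinely empty (no point of $S$ in its interior — this is where the sweep order is crucial, since all untiled points lie ``ahead'' of the frontier), and that a single convex tile through $d$ specified points plus Steiner points always suffices; a hyperplane-slab argument (each tile is the slab between two parallel hyperplanes, or a thin wedge) should make this clean, essentially the same idea that gives the Danzer–Rogers $\Omega(1/n)$ lower bound mentioned in the introduction.

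For the planar lower bound $g_2(n)\ge\lceil(n-1)/2\rceil$, I would exhibit a point configuration in general position that forces this many tiles — the natural candidate is points in convex position, or a ``Horton-set''-style configuration (the abstract flags Horton sets as a keyword). If $S$ is in convex position, every empty convex tile has all its $S$-vertices consecutive on the hull, and a counting argument on how many hull vertices a single tile can ``cover'' on its boundary, combined with the fact that the tiles must cover all $n$ hull edges/vertices, should yield that at least $\lceil(n-1)/2\rceil$ tiles are needed. The key lemma is an upper bound on how much ``progress'' one tile makes — essentially that a tile incident to $k$ points of $S$ contributes at most $k-1$ (or a similar linear quantity) toward a potential function counting $n-1$, mirroring the upper-bound construction and showing it is tight.

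For $g_d(n)=\Omega(n)$ in higher dimensions, the cleanest route is to take $S$ to be $n$ points in convex position in $\RR^d$ (e.g., $n$ points on the moment curve, which are in general position). Then every tile of a Steiner convex partition has at most $O(1)$ — or more carefully, a bounded number depending only on $d$ — of its vertices among points of $S$ that are ``exposed'' on $\conv(S)$, while all $n$ vertices of $\conv(S)$ must appear as vertices of some tile; a charging argument then forces $\Omega(n)$ tiles. The subtle point, and the likely main obstacle in this part, is controlling how many points of $S$ a single empty convex tile can have on its boundary: unlike the plane, a facet of a tile could in principle contain many points of $S$, so one needs that $S$ is in sufficiently general position (no $d+1$ on a hyperplane, and more) to ensure each tile touches only boundedly many points of $S$, after which the counting is routine.
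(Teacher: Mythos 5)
Your lower-bound constructions cannot work, and this is the fatal problem with the proposal. A point set in convex position (on a circle, or on the moment curve) is the \emph{easiest} instance, not the hardest: $\conv(S)$ itself is then an empty convex body, since emptiness only forbids points of $S$ in the \emph{interior} of a tile, so the trivial partition into the single tile $\conv(S)$ is a valid Steiner convex partition and $g_d(S)=1$. Points of $S$ on the boundary of a tile cost nothing, and because Steiner vertices are allowed, no tile is obliged to ``use up'' hull vertices, so your charging argument on hull vertices has nothing to charge against. Hard instances must instead have many points interior to $\conv(S)$, arranged so that no large subset has an empty convex hull. The paper takes $S=A\cup B$ with $A$ spanning $\conv(S)$ and $B$ in its interior: for $d=2$, $B$ is a regular $(n-3)$-gon, and a radial reference-point argument (one reference point per $b\in B$, placed just beyond $b$ as seen from a generic center $o$) shows that each tile absorbs at most $2$ reference points, giving $1+\lceil(n-3)/2\rceil=\lceil(n-1)/2\rceil$; for $d\ge 3$, $B$ is a generalized Horton set of Valtr, so that among any $h(d)+1$ points of $B$ some other point of $B$ lies in the interior of their convex hull, whence an empty tile can be incident to at most $h(d)$ points of $B$ and a two-references-per-interior-point count forces $\Omega(n)$ tiles. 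Your intuition that ``each tile touches only boundedly many points of $S$'' is the right shape, but it is this Horton-type emptiness property --- not general position --- that delivers it, and only for interior points.

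The upper bound also has a gap as written. If you batch the points $d$ at a time in $x_1$-order, the hyperplane spanned by a batch of $d$ points generally does not separate the already-tiled region from the unprocessed points (already for $d=2$, the line through two $x$-consecutive points can have later points on both sides), so the carved-off tile need not be empty and the frontier need not stay convex; parallel slabs perpendicular to the sweep direction do not help either, since then the $d$ points of a batch lie in the slab's interior, violating emptiness. The paper's fix is to choose the cutting hyperplane adaptively rather than by a fixed sweep: let $B$ be the set of points interior to the current convex region and cut along the supporting hyperplane of a facet of $\conv(B)$; this hyperplane contains $d$ points of $B$ and, being a supporting hyperplane, has all remaining points of $B$ on one side, so the piece cut off is empty and the residual region is convex. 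The count then exploits that the at least $d+1$ vertices of $\conv(S)$ are free (they lie on the boundary of every region they touch), yielding $1+\lceil i/d\rceil\le\lceil(n-1)/d\rceil$ tiles, where $i\le n-d-1$ is the number of interior points.
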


We say that a set of points in $\RR^d$ is in \emph{general position}
if every $k$-dimensional affine subspace contains at most $k+1$ points
for $0\leq k< d$.  We show that in the plane every Steiner convex
partition for $n$ points in general position, $i$ of which lie in the
interior of the convex hull, has $\Omega(i)$ tiles. This
leads to a simple constant-factor approximation algorithm.

\begin{theorem} \thmlab{T2}
    Given a set $S$ of $n$ points in general position in the plane, a
    ratio $3$ approximation of a minimum Steiner convex partition of
    $S$ can be computed in $O(n \log{n})$ time.
\end{theorem}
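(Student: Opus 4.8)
The plan is to separate the statement into a combinatorial lower bound and a matching algorithmic upper bound, and then combine them. First I would establish the lower bound claimed in the paragraph preceding the theorem: if $S$ has $n$ points in general position with $i$ of them in the interior of $\conv(S)$, then every Steiner convex partition of $S$ has $\Omega(i)$ tiles — in fact I expect the clean bound to be at least $(i+1)/2$ or $\lceil (i+1)/2 \rceil$ tiles, or something of that flavor. The idea is a charging/Euler argument: in a Steiner convex partition each tile is a convex polygon, and since the input points are in general position, a point of $S$ in the interior of $\conv(S)$ cannot lie in the interior of a tile (that tile would not be empty) nor in the relative interior of an edge unless that edge is ``pinned'' there; so each interior point of $S$ must be a vertex shared by several tiles, and around it the incident tiles have total angle $2\pi$, so it is a vertex of at least three tiles. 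Counting incidences between interior points of $S$ and tiles, and using that a convex polygon with $t$ tiles has a bounded number of vertices via Euler's formula for the planar subdivision, yields a linear lower bound $g_2(S) = \Omega(i)$; I would push the constants to get exactly the factor needed for ratio $3$.

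Second, for the upper bound I would exhibit an algorithm producing a Steiner convex partition with $O(i)$ tiles — concretely, I expect at most $i+1$ tiles, or $\lceil (n-1)/2 \rceil$ tiles from \thmref{T1} used together with $i \geq (n-1)/2 - O(1)$ or similar — and then argue the ratio. The natural construction: compute $\conv(S)$ in $O(n\log n)$ time, and then insert the $i$ interior points one at a time, each time cutting some current tile by a line through the new point (and possibly through an existing vertex), so each insertion increases the tile count by at most one; with care each interior point adds exactly one tile, giving $i+1$ tiles total. This is essentially the incremental convex-partition construction, and it runs in $O(n\log n)$ time with a suitable point-location structure (or even $O(n^2)$ naively, then optimized). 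Combining: $\mathrm{ALG} \le i+1$ while $\mathrm{OPT} \ge g_2(S) \ge (i+1)/3$ from the lower bound (after fixing constants), so $\mathrm{ALG} \le 3 \cdot \mathrm{OPT}$; one has to check the small cases $i = 0, 1, 2$ by hand since the convex hull itself is one tile when $i = 0$.

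The main obstacle I anticipate is getting the constant in the lower bound exactly right so the approximation ratio comes out to $3$ rather than some larger constant. The subtlety is that an interior point of $S$ need not be a vertex of three tiles in general — it could sit on a Steiner edge that passes straight through it, making it a vertex of only two tiles — and Steiner vertices can be introduced freely, which complicates the incidence count since Steiner vertices on the convex-hull boundary or in the interior change the Euler relation. I would handle this by a careful case analysis on the ``type'' of each interior point (corner of $\ge 3$ tiles, versus interior-of-edge of $2$ tiles), showing that in the second case the supporting line of that edge must be stopped on both sides by other features, so an amortized argument still charges $\Omega(1)$ tiles per interior point. The time bound $O(n\log n)$ for the algorithm is routine given standard computational-geometry machinery (convex hull plus incremental insertion with point location), so the algorithmic side should not be where the difficulty lies.
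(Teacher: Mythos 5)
Your plan has the right architecture (a linear lower bound on $\opt$ in terms of the number $i$ of interior points, plus an algorithm with $O(i)$ tiles), but both halves carry the wrong constants, and the two errors compound so that the route as described yields ratio $6$, not $3$. On the lower-bound side, the bound you hope for ($\opt\geq (i+1)/2$, later relaxed to $(i+1)/3$) is simply false: the paper exhibits a perturbed hexagonal-lattice configuration with two points of $S$ on each lattice edge whose Steiner convex partition has only about $i/6$ tiles, so no lower bound better than $i/6+1$ can hold. Relatedly, your charging mechanism starts from the premise that an interior point of $S$ must be a vertex of at least three tiles; this is the wrong picture --- an interior point typically lies in the relative interior of a straight partition edge, on the boundary of exactly two tiles, and this costs nothing extra. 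The correct use of general position is different: a straight edge of the partition can contain at most \emph{two} points of $S$ (three would be collinear), so $i\leq 2e$ where $e$ counts edges of the planar multigraph whose faces are the tiles; combining with Euler's formula and minimum degree $3$ at the multigraph's vertices gives $f\geq e/3+1\geq i/6+1$. You flag the ``point interior to an edge'' subtlety at the end, but the amortized fix you sketch does not lead to a constant better than $1/6$, because $1/6$ is the truth.

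On the upper-bound side, your incremental insertion (one new tile per interior point) gives $i+1$ tiles, and $(i+1)/(i/6+1)\to 6$. To reach ratio $3$ you need the partition to consume \emph{two} interior points per tile: repeatedly cut $\conv(S)$ along the supporting line of an edge of the convex hull of the remaining interior points (such a line contains exactly two of them, by general position), which slices off one empty tile and discards both points, yielding $\lceil i/2\rceil+1$ tiles in total. Then $(\lceil i/2\rceil+1)/(i/6+1)<3$ for all $i\geq 0$. The $O(n\log n)$ implementation via a semi-dynamic convex hull structure is indeed routine, as you say, but only for this two-points-per-cut algorithm.
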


The \emph{average} volume of a tile in a Steiner convex partition of $n$
points in the unit cube $[0,1]^d$ is an obvious lower bound for the
maximum possible volume of a tile, and for the maximum volume of any
empty convex body $C\subset [0,1]^d$. The lower bound $g_d(n)=\Omega(n)$
in \thmref{T1} shows that the average volume of a tile is
$O(1/n)$ in some instances, where the constant of proportionality
depends only on the dimension.  This implies that a simple
``averaging'' argument is not a viable avenue for finding a solution
to the problem of Danzer and Rogers.

\paragraph{Maximum empty polytope among $n$ points in a
   unit cube.}  In the second part of the paper, we consider the
following problem: Given a set of $n$ points in a rectangular box $B$ in
$\RR^d$, find a maximum-volume empty convex body $C \subset B$. Since
the ratio between volumes is invariant under affine transformations,
we may assume without loss of generality that $B=[0,1]^d$.  We
therefore have the problem of computing a maximum volume empty convex
body $C\subset [0,1]^d$ for a set of $n$ points in $[0,1]^d$.
It can be argued that the maximum volume empty convex body is a polytope,
however, the number and location of its vertices is unknown and this
represents the main difficulty.
For $d=2$ there is a polynomial-time exact algorithm (see~\secref{sec:conclusion})
while for $d \geq 3$ we are not aware of any exact algorithm.
Thus the problem of finding faster approximations naturally suggests itself.

There exist exact algorithms for some related problems.
Eppstein~\etal~\cite{EORW92} find the maximum area empty convex
$k$-gon with vertices among $n$ points in $O(kn^3)$ time, if it exists.
As a byproduct, a maximum area empty convex polygon with vertices among $n$
given points can be computed exactly in $O(n^4)$ time with their dynamic
programming algorithm. The running time was subsequently improved to $O(n^3\log n)$
by Fischer~\cite{Fis97} and then to $O(n^3)$  by Bautista-Santiago~\etal~\cite{BDL+11}.

By John's ellipsoid theorem~\cite{Ma02}, the maximum volume empty
ellipsoid in $[0,1]^d$ gives a $1/d^d$-appro\-xi\-ma\-tion.
Here we present a $(1-\eps)$-approximation for a maximum volume empty convex
body $\Copt$ by first guessing a good approximation of the bounding
hyperrectangle of $\Copt$ of minimum volume,
and then finding a sufficiently close approximation of $\Copt$ inside it.
We obtain the following two approximation algorithms.
The planar algorithm runs in quadratic time in $n$,
however, the running time degrades with the dimension.

\begin{theorem}\thmlab{algorithm:2d}
    Given a set $\PntSet$ of $n$ points in $[0,1]^2$ and
    parameter $\eps>0$, one can compute an empty convex body
    $C \subseteq [0,1]^2$, such that $\VolX{C} \geq
    (1-\eps)\VolX{\Copt}$. The running time of the algorithm is
    $O\pth{\eps^{-6} n^2}$.
\end{theorem}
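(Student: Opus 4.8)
The plan is to reduce the problem to a bounded number of subproblems, each of which searches for a near-optimal empty convex body inside a fixed axis-parallel rectangle, and then to discretize the search over convex bodies using a grid. First I would guess the axis-parallel bounding box $R$ of $\Copt$ of minimum area. Since $\Copt$ is a polytope, two of its vertices lie on opposite sides of $R$ in the $x$-direction and two on opposite sides in the $y$-direction; the extreme points of $\Copt$ in the coordinate directions are among the $O(n^2)$ candidate values (actually, the relevant sides of $R$ pass through points of $\PntSet$ or through the sides of $[0,1]^2$). Rather than enumerate exact bounding boxes, I would enumerate $O(\eps^{-O(1)} \log^{O(1)} n)$ candidate rectangles whose side lengths form a geometric progression with ratio $(1+\Theta(\eps))$ in each coordinate, so that some candidate $R$ has $\VolX{R}\le (1+\eps)\area(\text{bbox}(\Copt))$ while still containing a translate/scaled copy of a large fraction of $\Copt$. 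After an affine map sending $R$ to $[0,1]^2$ (which preserves area ratios and emptiness), the optimum within $R$ has bounding box essentially all of $[0,1]^2$, so its area is $\Omega(1)$ relative to $R$.

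Next, inside the normalized square I would lay down a uniform grid $\GridX{\delta}$ of spacing $\delta = \Theta(\eps/\,?)$ — more precisely $\delta$ chosen so that $1/\delta = O(\eps^{-2})$ — and argue that the best empty convex body whose vertices are grid points loses only a $(1-\eps)$ factor compared to $\Copt\cap R$. The key geometric fact is that shrinking a convex body $C$ toward its centroid by a factor $(1-c\delta)$ and then snapping to the grid yields an empty convex polygon contained in $C$ with area at least $(1-O(\delta/\text{width}(C)))\VolX{C}$; since $\text{width}(C)=\Omega(1)$ after normalization, taking $\delta=\Theta(\eps)$ (hence $O(\eps^{-1})$ grid lines per side, and after accounting for the guessing overhead the stated $\eps^{-8}$) suffices. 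Then the problem becomes: among the $O(\delta^{-2})$ grid points, with the $n$ input points as obstacles, find a maximum-area empty convex polygon with grid vertices. I would run the Eppstein~\etal~\cite{EORW92}-style dynamic program, but over the $O(\delta^{-2})$ grid points instead of arbitrary points, with an emptiness check against $\PntSet$: this is where the $O(kn^3)$-type bound gets replaced by a polynomial in $\delta^{-1}$ times $n$ (the $n$ factors, or $\log n$ after preprocessing, coming from testing whether a candidate grid triangle/edge is $\PntSet$-empty via a point-location or range-counting structure built once in $O(n\log n)$ time).

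Finally I would assemble the running time: $O(\eps^{-O(1)}\log^{O(1)} n)$ bounding-box guesses, each costing the grid dynamic program, and charge the $n^2\log^2 n$ factor to the combination of (i) the $O(n^2)$ genuinely distinct relevant bounding-box side positions that must be considered when the optimum's bbox side passes through a pair of input points, and (ii) the $O(\log n)$-per-query emptiness tests inside the DP, with an extra $\log n$ from organizing the candidate rectangles. The main obstacle I expect is the second step — proving that grid-snapping a convex body loses only a $(1-\eps)$ factor in area while keeping it empty — because naive snapping can both enlarge the body (destroying emptiness) and, if done by shrinking first, the shrink factor must be controlled in terms of the body's width, which is why the preliminary bounding-box normalization is essential: it guarantees width $\Omega(1)$ so that an additive $O(\delta)$ perturbation is a multiplicative $(1-O(\delta))$ one. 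A secondary subtlety is ensuring the geometric-progression guessing of the bounding box does not miss the optimum by more than $(1+\eps)$ in area; this requires that a $(1+\eps)$-inflated box still admits an empty convex body within $(1-\eps)$ of $\vOptX{\PntSet}$, which follows by intersecting $\Copt$ with the slightly-shrunk box and bounding the sliver lost.
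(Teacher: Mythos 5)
Your proposal has a genuine gap at its very first step: you normalize by the minimum-area \emph{axis-parallel} bounding box of $\Copt$ and then claim that, after mapping this box to $[0,1]^2$, the optimum ``has bounding box essentially all of $[0,1]^2$, so its area is $\Omega(1)$ relative to $R$,'' and later that $\mathrm{width}(C)=\Omega(1)$ after normalization. Neither implication holds. A convex body can touch all four sides of its axis-parallel bounding box and still have arbitrarily small area and width relative to that box: take a sliver of width $\delta$ along the diagonal of the unit square, e.g.\ $\conv\{(0,0),(\delta,0),(1,1),(1,1-\delta)\}$, whose axis-parallel bounding box is all of $[0,1]^2$ while its area is $\Theta(\delta)$. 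This is exactly the hard regime for this problem, since $\vOptX{\PntSet}$ can be as small as $\Theta(1/n)$ and the optimal body can be a long thin sliver in an arbitrary direction. In that case your normalized instance still requires a grid of resolution $\Theta(\eps/n)$, not $\Theta(\eps)$, and the dynamic program over grid points blows up to $\mathrm{poly}(n/\eps)$ with a much larger exponent than claimed; the entire $(1-\eps)$ error analysis (additive $O(\delta)$ loss being multiplicative $O(\delta)$ because the width is $\Omega(1)$) collapses. Your running-time accounting inherits the problem: the $n^2\log^2 n$ factor cannot be charged to ``bbox sides through pairs of input points,'' because the sides of the bounding box of $\Copt$ need not pass through input points at all, and in any case this does not repair the aspect-ratio issue.

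The fix, which is what the paper does in Subsection~\ref{ssec:refine}, is to normalize by the minimum-area bounding rectangle $B_0$ of \emph{arbitrary orientation}: for any planar convex body this rectangle has area $O(1)$ times the area of the body, so after the affine map $\Copt$ genuinely occupies a constant fraction of the normalized square and a grid with $m=O(1/\eps)$ suffices (via the discrete-hull bound of \lemref{discrete:hull} and the dynamic program of \lemref{2:d:compute:empty}, i.e.\ \lemref{brute:force}). The price is that one must also discretize the \emph{orientation}: the paper shows that $O(n)$ canonical orientations (angular resolution $\pi/(5n)$) and $O(\log^2 n)$ canonical side lengths suffice, because the shorter side of $B_0$ is at least $1/n$ (this uses the lower bound $\vOptX{\PntSet}=\Omega(1/n)$ from \thmref{T1}); combined with $O(1/\rho)=O(n)$ candidate translates and an exponentially decreasing search over $\rho$, this is where the $O(\eps^{-8}n^2\log^2 n)$ bound actually comes from. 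The remaining ingredients of your proposal --- grid discretization with an affine normalization, a DP over grid points with the input points as obstacles, and geometric-progression guessing of the box dimensions --- are all consonant with the paper's argument; it is only the missing treatment of orientation that breaks the proof.
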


\begin{theorem}\thmlab{empty}
    Given a set $\PntSet$ of $n$ points in $[0,1]^d$, $d\geq 3$, and
    a parameter $\eps>0$, one can compute an empty convex body
    $C \subseteq [0,1]^d$, such that $\VolX{C} \geq
    (1-\eps)\VolX{\Copt}$. The running time of the algorithm is
   $$\exp\pth{ O  \pth{\eps^{-d(d-1)/(d+1)}\log \eps^{-1}}} n^{1 + d(d-1)/2} \log^d n.$$
\end{theorem}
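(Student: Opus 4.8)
The plan is to reduce the problem of approximating $\Copt$ to finding an approximately largest empty convex body inside a \emph{known} axis-parallel box, and then to discretize the search for that body. First I would observe that, up to a factor of $d^d$ (and hence affecting $\eps$ only by a bounded amount once we iterate), one may restrict attention to convex bodies whose minimum-volume bounding hyperrectangle is axis-parallel only after an affine normalization; since affine maps preserve volume ratios, the real step is to \emph{guess} the bounding box $R$ of $\Copt$. To do this I would lay down, for each axis, a geometric sequence of $O(\eps^{-1}\log(1/\eps))$ candidate side lengths and $O(n)$ candidate positions (coordinates taken from the point set, snapped to a grid of resolution $\eps/n$ times the guessed side length); this yields $(n/\eps)^{O(d)}$ candidate boxes $R$, one of which contains $\Copt$ and is contained in a $(1+\eps)$-scaled copy of it. For each candidate $R$ we then solve, approximately, the subproblem: find the maximum-volume empty convex body $C \subseteq R$.

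For the subproblem I would use a grid argument inside $R$. Rescale so $R = [0,1]^d$. Place a grid $\Grid$ of spacing $\delta$ (to be chosen), and note that any convex body $C \subseteq [0,1]^d$ is sandwiched between the union of grid cells it fully contains and the union of cells it meets; the ``boundary shell'' has volume $O(\delta)$ times the surface area, which for a convex body in the unit box is $O(d)$, so rounding $C$ outward to a union of grid cells costs only an $O(d\delta)$ additive error in volume. Thus it suffices to search over ``grid-convex'' bodies: unions of cells of $\Grid$ that are empty of $S$ and whose union is convex. The number of vertices of the convex hull of a subset of a $\delta^{-1}\times\cdots\times\delta^{-1}$ grid is $O(\delta^{-d(d-1)/(d+1)})$ by the classical lattice-polytope bound of Andrews; hence an (approximately) optimal empty grid-convex body is the convex hull of at most $k = O(\delta^{-d(d-1)/(d+1)})$ grid points. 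I would therefore enumerate all $k$-subsets of a sufficiently fine grid — but here is the catch, so instead I enumerate only over grid points and test emptiness via the data structure below. Taking $\delta = \Theta(\eps/d)$ makes the rounding error at most $\eps\cdot\vOptX{\PntSet}$ once one also notes $\vOptX{\PntSet} \geq n^{-1}$ up to constants (the hyperplane decomposition from the introduction), so the grid is only as fine as $\mathrm{poly}(n/\eps)$ in each coordinate and $k = O(\eps^{-d(d-1)/(d+1)})$.

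To enumerate efficiently I would \emph{not} iterate over all $\binom{(1/\delta)^d}{k}$ subsets blindly; instead, for a fixed candidate box $R$, I would use the fact that an optimal empty grid-convex polytope is determined by which of the $n$ input points lie just outside each of its facets. I would build, in $O(n^{1+d(d-1)/2}\log^{d}n)$ time, an arrangement / range-searching structure on $S$ so that for any candidate facet-defining hyperplane spanned by grid points one can in polylogarithmic time decide emptiness of a halfspace or a cell; the exponent $1+d(d-1)/2$ is exactly the cost of building a cutting-based structure for halfspace emptiness among $n$ points (roughly $n^{\lceil d/2\rceil}$, matched here to $n^{1+d(d-1)/2}$ after the standard $\eps$-net / cutting refinements that keep the $\log^d n$ factors). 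Enumerating the $k$-vertex grid-convex candidates then contributes the $\exp(O(\eps^{-d(d-1)/(d+1)}\log\eps^{-1}))$ factor, because the number of convex polytopes on $k$ grid vertices is $2^{O(k\log(1/\delta))} = \exp(O(\eps^{-d(d-1)/(d+1)}\log\eps^{-1}))$; multiplying by the $(n/\eps)^{O(d)}$ box guesses and the $n^{1+d(d-1)/2}\log^d n$ structure cost, and absorbing the $(n/\eps)^{O(d)}$ and $\eps$-polynomial factors into the stated exponential and $\log^d n$ terms, yields the claimed bound $\exp(O(\eps^{-d(d-1)/(d+1)}\log\eps^{-1}))\, n^{1+d(d-1)/2}\log^d n$.

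The main obstacle, and the step I expect to require the most care, is controlling the interaction between the two discretizations: the grid must be fine enough that rounding $\Copt$ to a grid-convex body loses only $\eps\cdot\vOptX{\PntSet}$, yet coarse enough that Andrews' bound gives $k = O(\eps^{-d(d-1)/(d+1)})$ independent of $n$; this works only because $\vOptX{\PntSet} = \Omega(1/n)$ lets the grid resolution scale with $n$ while the \emph{number} of relevant grid points on the hull stays $n$-free. A secondary subtlety is that the guessed bounding box need not be exactly the bounding box of $\Copt$, so one must verify that shrinking $\Copt$ slightly to fit inside the nearest candidate box, then rounding, still leaves a body within $(1-\eps)\vOptX{\PntSet}$ — this is where the geometric (rather than additive) spacing of the candidate side lengths is essential.
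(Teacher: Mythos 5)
There is a genuine gap: your proposal never handles the \emph{orientation} of $\Copt$, and this is precisely where the $n^{1+d(d-1)/2}$ factor in the theorem comes from. You enumerate only axis-parallel candidate boxes $R$ (geometric side lengths, $O(n)$ positions per axis), and claim that one of them contains $\Copt$ and is contained in a $(1+\eps)$-scaled copy of it. This fails for a long, thin $\Copt$ tilted diagonally: a slab of volume $\Theta(1/n)$ at a generic orientation has every axis-parallel bounding box of volume $\Theta(1)$, so no candidate in your family has volume within a constant factor of $\VolX{\Copt}$. Your opening remark that an affine normalization costs only a factor of $d^d$ does not repair this --- John's theorem relates $\Copt$ to an ellipsoid of \emph{unknown} orientation, so the normalizing affine map itself depends on the unknown orientation of $\Copt$ and must be guessed. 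The paper's proof does exactly this: it takes the minimum-volume bounding hyperrectangle $B_0$ of $\Copt$ of \emph{arbitrary} orientation, discretizes the sphere of directions to $O(n^{d-1})$ candidates for the first axis and recursively $O(n^{d-2}), \ldots$ for the remaining axes (total $O(n^{d(d-1)/2})$ orientations), guesses each side length up to a factor of $2$ (giving the $\log^d n$), and uses a tiling by translates to localize $R$ (giving the extra factor $n$). Only because the resulting $R$ satisfies $\VolX{\Copt} = \Omega(\VolX{R})$ can one rescale $R$ to the unit cube, take a grid with $m = O(1/\eps)$ cells per side, and apply Andrews' bound to get $k = O(\eps^{-d(d-1)/(d+1)})$ hull vertices \emph{independent of $n$}. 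Without the orientation guess, the grid inside your axis-parallel $R$ would have to be polynomially fine in $n$ to control the rounding error relative to $\vOptX{\PntSet} = \Omega(1/n)$, and then the lattice-polytope vertex bound and the enumeration cost both acquire $n$-dependence, destroying the claimed running time.

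A secondary problem is that your accounting of the polynomial factors is reverse-engineered rather than derived. You attribute $n^{1+d(d-1)/2}\log^d n$ to building a cutting-based halfspace-emptiness structure on $S$; such structures cost roughly $n^{\lfloor d/2 \rfloor}$ and have nothing to do with this exponent. In the paper, $n^{d(d-1)/2}$ is the number of candidate orientations, the extra $n$ is the number of translates of the candidate box meeting $[0,1]^d$, and $\log^d n$ is the number of side-length combinations; the emptiness test inside each box is the brute-force enumeration of Lemma~\ref{brute:force:h:dim}, which contributes only the $\exp(O(\eps^{-d(d-1)/(d+1)}\log\eps^{-1}))$ factor. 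Your grid-rounding argument and your use of Andrews' bound are sound in themselves and match the paper's Lemma~\ref{discrete:hull} and Lemma~\ref{brute:force:h:dim}, but they only apply after the orientation-aware bounding box has been found.
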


As far as the problem of Danzer and Rogers is concerned,
one need not consider convex sets---it suffices to consider
simplices---and for simplices the problems considered are much
simpler. Specifically, every convex body $C$ in $\RR^d$,
$d \geq 2$, contains a simplex $T$ of volume
$\VolX{T}\geq \VolX{C}/(d+2)^d$~\cite{Las11}.
That is, for fixed $d$, the largest empty simplex amidst $n$ points in
the unit cube $[0,1]^d$ yields a constant-factor approximation
of the largest volume convex body (polytope) amidst the same $n$ points.
Consequently, the asymptotic dependencies on $n$ of the volumes of the
largest empty simplex and convex body are the same.
For $d=2$ there is a polynomial-time exact algorithm for computing the
largest empty triangle amidst $n$ points in $[0,1]^2$ (see~\secref{sec:conclusion})
while for $d \geq 3$ we are not aware of any exact algorithm for computing the
largest empty simplex amidst $n$ points in $[0,1]^d$.

\paragraph{Related work.}
Decomposing polygonal domains into convex sub-polygons has been also
studied extensively. We refer to the article by Keil~\cite{K00} for a
survey of results up to the year 2000. For instance, when the polygon
may contain holes, obtaining a minimum convex partition is NP-hard,
regardless of whether Steiner points are allowed.  For polygons
without holes, Chazelle and Dobkin~\cite{CD79} obtained an $O(n+r^3)$
time algorithm for the problem of decomposing a polygon with $n$
vertices, $r$ of which are reflex, into convex parts, with Steiner
points permitted. Keil~\cite{K00} notes that although there are
an infinite number of possible locations for the Steiner points,
a dynamic programming approach is amenable to obtain an exact
(optimal) solution; see also~\cite{KS02,Sh92}.

Fevens~\etal~\cite{FMR01} designed a polynomial time algorithm for
computing a minimum convex partition for a given set of $n$ points in
the plane if the points are arranged on a constant number of convex
layers.
The problem of minimizing the total Euclidean length of the edges of a
convex partition has been also considered.  Grantson and
Levcopoulos~\cite{GN06}, and Spillner~\cite{Sp08} proved that the
shortest convex partition and Steiner convex partition problems are
fixed parameter tractable, where the parameter is the number of points
of $P$ lying in the interior of $\conv(P)$.
Dumitrescu~and~T\'oth~\cite{DT11} proved that every set of $n$ points in $\RR^2$
admits a Steiner convex partition which is at most $O(\log n/\log \log
n)$ times longer than the minimum spanning tree, and this bound cannot
be improved. Without Steiner points, the best upper bound for the
ratio of the minimum length of a convex partition and the length of a
minimum spanning tree (MST) is $O(n)$~\cite{Ki80}.

A largest area convex polygon contained in a given (non-convex) polygon
with $n$ vertices can be found by the  algorithm of Chang and
Yap~\cite{CY86} in $O(n^7)$ time. The problem is known as the
\emph{potato-peeling problem}.
On the other hand, a largest area triangle contained in a simple polygon
with $n$ vertices, can be found by the  algorithm of Melissaratos and
Souvaine~\cite{ms-sphsg-92} in $O(n^4)$ time.
Hall-Holt~\etal~\cite{hkkms-flspp-06} compute a constant approximation
in time $O(n \log{n})$. The same authors show how to compute a
$(1-\eps)$-approximation of the largest \emph{fat} triangle inside a
simple polygon (if it exists) in time $O(n)$. Given a triangulated
polygon (with possible holes) with $n$ vertices, Aronov~\etal~\cite{AKLS11}
compute the largest area  convex polygon respecting the triangulation edges
in $O(n^2)$ time.

For finding a maximum volume empty axis-parallel box amidst $n$ points
in $[0,1]^d$, Backer and Keil~\cite{BK10} reported an algorithm with
worst-case running time of $O(n^d \log^{d-2} {n})$. An empty
axis-aligned box whose volume is at least $(1-\eps)$ of the maximum
can be computed in
$O\left(\left( \frac{8 e d}{\eps^2} \right)^d \, n \, \log^d{n}\right)$
time by the algorithm of Dumitrescu and Jiang~\cite{DJ12}.

Lawrence and Morris~\cite{LM09} studied the minimum integer $k_d(n)$
such that the complement $\RR^d\setminus S$ of any $n$-element set
$S\subset \RR^d$, not all in a hyperplane, can be \emph{covered} by
$k_d(n)$ convex sets.  They prove $k_d(n) =\Omega(\log n/ d \log
\log n)$.
It is known that covering the complement of $n$ uniformly distributed points
in $[0,1]^d$ requires $\Omega(n/d\log n)$ convex sets, which
follows from the upper bound in the problem of Danzer and Rogers.

\section{Combinatorial bounds} \seclab{sec:bounds}

In this section we prove \thmref{T1}. We start with the upper bound.
The following simple algorithm returns a Steiner convex partition with
at most $\lceil (n-1)/d\rceil$ tiles for any $n$ points in $\RR^d$.

\medskip
\noindent Algorithm {\bf A1}:
\begin{enumerate} \itemsep 1pt
    \item[{\sc Step 1.}] Compute the convex hull $R \gets \conv(S)$ of
    $S$.  Let $A\subseteq S$ be the set of hull vertices, and let $B
    =S \setminus A$ denote the remaining points.
    \item[{\sc Step 2.}] Compute $\conv(B)$, and let $H$ be the
    supporting hyperplane of an arbitrary $(d-1)$-dimensional face of
    $\conv(B)$.  Denote by $H^+$ the halfspace that contains $B$, and
    $H^-=\RR^d\setminus H^+$.  The hyperplane $H$ contains $d$ points
    of $B$, and it decomposes $R$ into two convex bodies: $R\cap H^-$
    is empty and $R\gets R\cap H^+$ contains all points in $B\setminus
    H$. Update $B \gets B \setminus H$ and $R \gets R \cap H^+$.
    \item[{\sc Step 3.}] Repeat {\sc Step 2} with the new values of
    $R$ and $B$ until $B$ is the empty set.  (If $|B|< d$, then any
    supporting hyperplane of $B$ completes the partition.)
\end{enumerate}

\begin{figure}[htb]
    \centerline{\epsfxsize=6.1in \epsffile{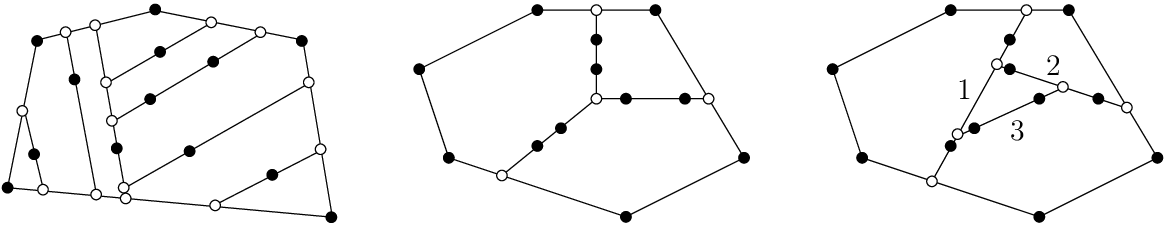}}
    \caption{Steiner convex partitions with Steiner points drawn as
       hollow circles. Left: A Steiner convex partition of a set of 13
       points.  Middle: A Steiner partition of a set of 12 points into
       three tiles.  Right: A Steiner partition of the same set of 12
       points into 4 tiles, generated by Algorithm {\bf A1} (the
       labels reflect the order of execution).}
    \figlab{f1}
\end{figure}

It is obvious that the algorithm generates a Steiner convex partition
of $S$. An illustration of Algorithm {\bf A1} on a small planar
example appears in \figref{f1}~(right).  Let $h$ and $i$ denote the
number of hull and interior points of $S$, respectively, so that
$n=h+i$. Each hyperplane used by the algorithm removes $d$ interior
points of $S$ (with the possible exception of the last round if $i$ is
not a multiple of $d$). Hence the number of convex tiles is $1+\lceil
i/d \rceil$, and we have $1+\lceil i/d \rceil = \lceil (i+d)/d\rceil
\leq \lceil (n-1)/d\rceil$, as required.

\paragraph{Lower bound in the plane.}  A matching lower
bound in the plane is given by the following construction.  For $n\geq
3$, let $S=A\cup B$, where $A$ is a set of 3 non-collinear points in
the plane, and $B$ is a set of $n-3$ points that form a regular
$(n-3)$-gon in the interior of $\conv(A)$, so that $\conv(S)=\conv(A)$
is a triangle. If $n=3$, then $\conv(S)$ is an empty triangle, and
$g_2(S)=1 =\lceil (n-1)/2\rceil$. If $4\leq n\leq 5$, $S$ is not in
convex position, and so $g_2(S)\geq 2 =\lceil (n-1)/2\rceil$.  Suppose
now that $n\geq 6$.

Consider an arbitrary convex partition of $S$. Let $o$ be a point in
the interior of $\conv(B)$ such that the lines $os$, $s\in S$, do not
contain any edges of the tiles. Refer to \figref{lower-bound}.
For each point $s\in B$, choose a \emph{reference point} $r(s)\in
\RR^2$ on the ray $\overrightarrow{os}$ in $\conv(A)\setminus
\conv(B)$ sufficiently close to point $s$, and lying in the interior
of a tile.  Note that the convex tile containing $o$ cannot contain
any reference points. We claim that any tile contains at most 2
reference points.  This immediately implies $g_2(S)\geq 1+ \lceil
(n-3)/2\rceil =\lceil (n-1)/2\rceil$.

Suppose, to the contrary, that a tile $\tau$ contains 3 reference
points $r_1,r_2,r_3$, corresponding to the points $s_1,s_2,s_3$.
Refer to \figref{lower-bound}.
\begin{figure} [htb]
\centerline{\epsfxsize=2.1in \epsffile{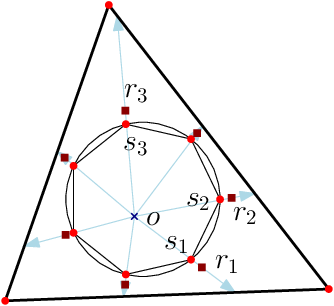}}
    \caption{Lower bound construction in $\RR^2$.}
    \figlab{lower-bound}
\end{figure}
Note that $o$ cannot be in the interior of $\tau$, otherwise $\tau$
would contain all points $s_1,s_2,s_3$ in its interior.
Hence $\conv(\{o,s_1,s_2,s_3\})$ is a quadrilateral, and
$\conv(\{o,r_1,r_2,r_3\})$ is also a quadrilateral, since the
reference points are sufficiently close to the corresponding points in
$B$.  We may assume w.l.o.g. that
vertices of $\conv(\{o,s_1,s_2,s_3\})$ are $o$, $s_1$, $s_2$, $s_3$ in
counterclockwise order. Then $s_2$ lies in the interior of
$\conv(\{r_1,r_2,r_3\})$. Hence the tile containing $r_1$, $r_2$, and $r_3$,
must contain point $s_2$ in its interior, a contradiction. We conclude that
every tile $\tau$ contains at most 2 reference points, as required.

\paragraph{Lower bounds for $d\geq 3$.} A similar
construction works in for any $d\geq 2$, but the lower bound no longer
matches the upper bound $g_d(n)\leq \lceil (n-1)/d\rceil$ for $d \geq 3$.

Recall that a \emph{Horton set}~\cite{H83} is a set $S$ of $n$ points
in the plane such that the convex hull of any 7 points is non-empty.
Valtr~\cite{Va92} generalized Horton sets to $\RR^d$. For every
$d\in \NN$, there exists a minimal integer $h(d)$ with the property
that for every $n\in \NN$ there is a set $S$ of $n$ points in general
position in $\RR^d$ such that the convex hull of any $h(d)+1$ points
in $S$ is non-empty. It is known that $h(2)=6$, and Valtr proved that
$h(3)\leq 22$, and in general that $h(d)\leq 2^{d-1}(N(d-1)+1)$, where
$N(k)$ is the product of the first $k$ primes.

We construct a set $S$ of $n\geq d+1$ points in $\RR^d$ as follows.
Let $S=A\cup B$, where $A$ is a set of $d+1$ points in general
position in $\RR^d$, and $B$ is a generalized Horton set of $n-(d+1)$
points in the interior of $\conv(A)$, such that the interior of any
$h(d)+1$ points from $B$ contains some point in $B$.

Consider an arbitrary Steiner convex partition of $S$. Every point
$b\in B$ is in the interior of $\conv(S)$, and so it lies on the
boundary of at least 2 convex tiles. For each $b\in B$, place two
\emph{reference points} in the interiors of 2 distinct tiles incident
to $b$. Every tile contains at most $h(d)$ reference points. Indeed,
if a tile contains $h(d)+1$ reference points, then it is incident to
$h(d)+1$ points in $B$, and some point of $B$ lies in the interior of
the convex hull of these points, a contradiction.

There are $2(n-d-1)$ reference points, and every tile contains at most
$h(d)$ of them. So the number of tiles is at least $\lceil
2(n-d-1)/h(d) \rceil$. Consequently, for every fixed $d\geq 2$, we
have $g_d(n) =\Omega(n)$.

\section{Approximating the minimum Steiner convex partition in $\RR^2$}
\seclab{sec:steiner}

In this section we prove~\thmref{T2} by showing that our simple-minded
algorithm {\bf A1} from \secref{sec:bounds} achieves a constant-factor
approximation in the plane if the points in $S$ are in general
position.

\paragraph{Approximation ratio.}  Recall that algorithm
{\bf A1} computes a Steiner convex partition of $\conv(S)$ into at most
$1+\lceil i/2 \rceil$ parts, where $i$ stands for the number of
interior points of $S$.

If $i=0$, the algorithm computes an optimal partition, \ie,
$\alg=\opt=1$. Assume now that $i \geq 1$.  Consider an optimal
Steiner convex partition $\Pi$ of $S$ with $\opt$ tiles.  We construct a
planar multigraph $G=(V,E)$ as follows. The \emph{faces} of $G$ are
the convex tiles and the exterior of $\conv(S)$ (the outer face).  The
\emph{vertices} $V$ are the points in the plane incident to at least 3
faces (counting the outer face as well). Since $i \geq 1$, $G$ is
non-empty and we have $|V| \geq 2$. Each \emph{edge} in $E$ is a
Jordan arc on the common boundary of two faces. An edge between two
bounded faces is a straight line segment, and so it contains at most
two interior points of $S$. An edge between the outer face and a
bounded face is a convex arc, containing hull points from $S$. Double
edges are possible if two vertices of the outer face are connected by
a straight line edge and a curve edge along the boundary---in this
case these two parallel edges bound a convex face. No loops are
possible in $G$. Since $\Pi$ is a convex partition, $G$ is connected.

Let $v$, $e$, and $f$, respectively, denote the number of vertices,
edges, and bounded (convex) faces of $G$; in particular, $f=\opt$.  By
Euler's formula for planar multigraphs, we have $v-e+f=1$, that is,
$f=e-v+1$.  By construction, each vertex of $G$ is incident to at
least 3 edges, and every edge is incident to two vertices. Therefore,
$3v \leq 2e$, or $v \leq 2e/3$.  Consequently, $f=e-v+1 \geq
e-2e/3+1=e/3+1$.  Since $S$ is in general position, each straight-line
edge of $G$ contains at most 2 interior points from $S$. Curve edges
along the boundary do not contain interior points. Hence each edge in
$E$ is incident to at most two interior points in $S$, thus $i \leq
2e$.  Substituting this into the previous inequality on $f$ yields
$\opt= f \geq e/3+1 \geq i/6 +1$.  Comparing this lower bound with the
upper bound $\alg \leq \lceil i/2\rceil+1$, we conclude that
$$ \frac{\alg}{\opt}
\leq \frac{\lceil i/2\rceil +1}{i/6+1} \leq 3 \ \frac{i+3}{i+6} <3, $$
and the approximation ratio of 3 follows.

\paragraph{Tightness of the approximation ratio.}  We
first show that the ratio $3$ established above is tight for Algorithm {\bf A1}.
We construct a planar point set $S$ as follows. Refer to  \figref{fig:tightness}~(left).
Consider a large (say, hexagonal) section of a hexagonal lattice.  Place Steiner
vertices at the lattice points, and place two points in $S$ on each
lattice edge. Slightly perturb the lattice, and add a few more points
in $S$ near the boundary, and a few more Steiner points, so as to
obtain a Steiner convex partition of $S$ with no three points
collinear. Denote by $v$, $e$, and $f$, the elements of the planar
multigraph $G$ as before. Since we consider a large lattice section,
we have $v,e,f \to \infty$. We write $a \sim b$, whenever $a/b \to 1$.
As before, we have $f+v =e+1$, and since each non-boundary edge is
shared by two convex faces, we have $e \sim 6f/2=3f$. By construction,
$i \sim 2e \sim 6f$, hence $f \sim i/6$. Therefore the convex
partition constructed above has $f \sim i/6$, while Algorithm {\bf A1}
constructs one with about $i/2$ faces. Letting $e \to \infty$, then $i
\to \infty$, and the ratio $\alg/\opt$ approaches $3$ in the limit:
$\alg/\opt \sim (i/2)/(i/6)=3$.
\medskip

\begin{figure} [htb]
\centerline{\epsfxsize=5in \epsffile{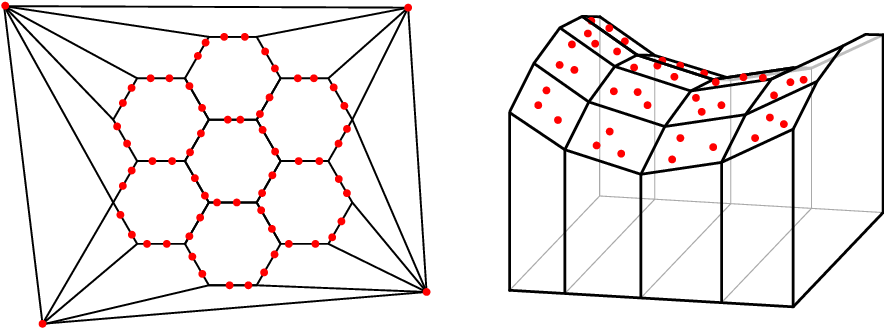}}
    \caption{Left: two points on each edge of a section of a perturbed
      hexagonal lattice in $\RR^2$, and four extra vertices of a bounding box.
     Right: Points in general position on a saddle surface in $\RR^3$.}
    \figlab{fig:tightness}
\end{figure}

\paragraph{Time analysis.} Algorithm {\bf A1} can be implemented to
run in $O(n \log n )$ time for a set $S$ of $n$ points in the plane.
We employ the semi-dynamic (delete only) convex
hull data structure of Hershberger and Suri~\cite{HS92}. This data
structure supports point deletion in $O(\log n)$ time, and uses $O(n)$
space and $O(n\log n)$ preprocessing time. We maintain the boundary of
a convex polygon $R$ in a binary search tree, a set $B\subset S$ of
points lying in the interior of $R$, and the convex hull $\conv(B)$
with the above semi-dynamic data structure~\cite{HS92}.  Initially,
$R=\conv(S)$, which can be computed in $O(n\log n)$ time; and
$B\subset S$ is the set of interior points.  In each round of the
algorithm, consider the supporting line $H$ of an arbitrary edge $e$
of $\conv(B)$ such that $B$ lies in the halfplane $H^+$.  The two
intersection points of $H$ with the boundary of $R$ can be computed in
$O(\log n)$ time. At the end of the round, we can update $B \gets B
\setminus H$ and $\conv(B)$ in $O(k\log n)$ time, where $k$ is the
number of points removed from $B$; and we can update $R \gets R \cap
H^+$ in $O(\log n)$ time. Every point is removed from $B$ exactly
once, and the number of rounds is at most $\lceil (n-3)/2\rceil$, so
the total update time is $O(n\log n)$ throughout the algorithm.

\paragraph{Remark.}
Interestingly enough, in dimensions 3 and higher, Algorithm {\bf A1}
does not give a constant-factor approximation.  For every integer $n$,
one can construct a set $S$ of $n$ points in general position in
$\RR^3$ such that $i=n-4$ of them lie in the interior of $\conv(S)$,
but the minimum Steiner convex partition has only $O(\sqrt{n})$ tiles.
In contrast, Algorithm {\bf A1} computes a Steiner partition with
$i/3=(n-4)/3$ convex tiles.

We first construct the convex tiles, and then describe the point set
$S$. Specifically, $S$ consists of 4 points of a large tetrahedron,
and 3 points in general position on the common boundary of certain
pairs of adjacent tiles.

Let $k=\lceil \sqrt{(n-4)/3}\rceil$. Place $(k+1)^2$ Steiner points
$(a,b,a^2-b^2)$ on the saddle surface $z=x^2-y^2$ for pairs of
integers $(a,b)\in \ZZ^2$, $-\lfloor k/2\rfloor \leq a,b\leq \lceil
k/2\rceil$.  The four points $\{(x,y,x^2-y^2): x\in \{a,a+1\}, y\in
\{b, b+1\}\}$ form a parallelogram for every $(a,b)\in \ZZ^2$,
$-\lfloor k/2\rfloor \leq a,b\leq \lceil k/2\rceil-1$.  Refer to
\figref{fig:tightness}~(right).  These parallelograms form a terrain over
the region $\{(x,y): -\lfloor k/2\rfloor \leq x,y\leq \lceil
k/2\rceil\}$.  Note that no two parallelograms are coplanar.
Subdivide the space \emph{below} this terrain by vertical planes
$x=a$, $-\lfloor k/2\rfloor \leq a\leq \lceil k/2\rceil$.  Similarly,
subdivide the space \emph{above} this terrain by planes $y=b$,
$-\lfloor k/2\rfloor \leq b\leq \lceil k/2\rceil$. We obtain $2k$
interior-disjoint convex regions, $k$ above and $k$ below the terrain,
such that the common boundary of a region above and a region below is
a parallelogram of the terrain.  The points in $\RR^3$ that do not lie
above or below the terrain can be covered by 4 convex wedges.

Enclose the terrain in a sufficiently large tetrahedron $T$. Clip the
$2k$ convex regions and the 4 wedges into the interior of $T$. These
$2k+4$ convex bodies tile $T$.  Choose 3 noncollinear points of $S$ in
each of the $k^2$ parallelograms, such that no 4 points are coplanar
and no 2 are collinear with vertices of $T$. Let the point set $S$ be
the set of 4 vertices of the large tetrahedron $T$ and the $3k^2$
points selected from the parallelograms.

\section{Approximating the maximum empty convex body}
\seclab{sec:body}

Let $\PntSet$ be a set of points in the unit cube $[0,1]^d \subseteq
\RR^d$. Our task is to approximate the largest convex body $\Body
\subseteq [0,1]^d$ that contains no points of $\PntSet$ in its
interior. Let $\Copt = \Copt(\PntSet)$ denote this body.

\subsection{Approximation by the discrete hull}

In the following, assume that $m>0$ is some integer, and consider the
grid point set
$$ \GridX{m} = \brc{(i_1, \ldots, i_d)/m \sep{ i_1,  \ldots, i_d
\in \brc{0, 1, \ldots, m}}}. $$
Let $\PntSet \subseteq [0,1]^d$ be a point set, and let $\Copt$ be
the corresponding largest empty convex body in $[0,1]^d$.
Given a grid $\Grid(m)$, we call $\conv(\Copt \cap \GridX{m})$
the \emph{discrete hull} of $\Copt$~\cite{hp-osafd-98a}.
We need the following easy lemma.

\begin{lemma}\lemlab{discrete:hull}\label{discrete:hull}
    Let $C\subseteq [0,1]^d$ be a convex body and $D = \conv(C \cap
    \GridX{m})$.  Then we have $\VolX{C} - \VolX{ D } = O(1/m)$, where
    the constant of proportionality depends only on $d$.
\end{lemma}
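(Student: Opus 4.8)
The plan is to bound the ``sliver'' $C \setminus D$ by a thin shell near the boundary $\partial C$. First I would observe that if $C$ has small diameter or $C \cap \GridX{m}$ is low-dimensional the claim may fail trivially, so I should first reduce to the case where $C$ is ``fat enough'' — but actually the cleanest route is: if $C$ contains no full-dimensional subset of $\GridX{m}$ (i.e.\ $\conv(C\cap\GridX{m})$ is degenerate, or even empty), then $\VolX{D}=0$, but then $C$ is contained in a slab of width $O(1/m)$ around some hyperplane, so $\VolX{C}=O(1/m)$ since $C\subseteq[0,1]^d$; this handles the degenerate case. Otherwise $D$ is a full-dimensional polytope, and the main task is to show every point of $C$ is within distance $O(1/m)$ of $D$.

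The key geometric step: fix any point $x \in C$. I claim $x$ lies within $\ell_\infty$-distance $O(1/m)$ of $\conv(C\cap\GridX{m})$. Consider the axis-parallel grid cell (a cube of side $1/m$) containing $x$; more useful is to take the closed cube $Q$ of side $c/m$ centered appropriately so that $Q$ has its $2^d$ corners on $\GridX{m}$ and $x\in Q$ — but the corners of $Q$ need not lie in $C$. To get around this, I would instead argue as follows: since $C$ is convex and full-dimensional containing a grid point $p_0$, one can walk from $p_0$ toward $x$ and stop at the grid point of $C$ ``closest to'' $x$ along a suitable triangulation of the grid. Concretely, triangulate $[0,1]^d$ into simplices with vertices in $\GridX{m}$ and diameter $O(1/m)$ (e.g.\ the Kuhn/Freudenthal triangulation scaled by $1/m$). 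The point $x$ lies in some simplex $\sigma$ of this triangulation. Not all vertices of $\sigma$ are in $C$ in general, but I claim at least one is: this is false in general too. So the honest approach is the shell argument below.

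\textbf{Shell argument.} Let $D=\conv(C\cap\GridX{m})$, assumed full-dimensional with some interior point $p_0$ (a grid point). I claim $C$ is contained in the $(O(1/m))$-neighborhood of $D$; since $\partial D$ has $(d-1)$-dimensional measure $O(1)$ (it lies in $[0,1]^d$, and the surface area of a convex body inside the unit cube is bounded by a constant depending only on $d$ by monotonicity of surface area under inclusion of convex bodies), the $\delta$-neighborhood of $D$ outside $D$ has volume $O(\delta)$, giving $\VolX{C}-\VolX{D}\le \VolX{(D\oplus B(0,\delta))\setminus D}=O(\delta)=O(1/m)$ with $\delta=O(1/m)$. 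To prove the neighborhood claim, take $x\in C$ and let $y$ be the intersection of the segment $\overline{p_0 x}$ with $\partial D$ closest to $x$ on the far side (if $x\in D$ nothing to prove). Supporting $D$ at $y$ by a hyperplane $h$, all of $C\cap\GridX{m}$ lies in the closed halfspace $h^-$. Now consider any grid cell meeting the segment beyond $y$ toward $x$: I want to show there is a grid point of $C$ on the $x$-side of (a translate of) $h$, which would contradict the supporting property unless $x$ is close to $h$. This is where the argument needs the quantitative input that any point of $[0,1]^d\cap C$ has a grid point of $\GridX m$ within $\ell_\infty$-distance $1/m$ — but that grid point need not be in $C$. \emph{Resolution:} use that $x\in C$ and $C$ convex, so the cube $R$ of side $1/m$ with $x$ as a corner pointing toward $p_0$ has at least its corner nearest $p_0$... still not guaranteed in $C$.

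The cleanest correct argument, which I would actually write: scale everything by $m$ so $\GridX{m}$ becomes $\ZZ^d\cap[0,m]^d$ and it suffices to show $\VolX{mC}-\VolX{\conv(mC\cap\ZZ^d)}=O(m^{d-1})$. Now use the classical fact (e.g.\ via the theory of lattice polytopes / the argument in Bárány–Larman or in Andrews' theorem setup) that for a convex body $K\subseteq[0,R]^d$ with $R$ large, $\VolX{K}-\VolX{\conv(K\cap\ZZ^d)}=O(R^{d-1})$: indeed every point of $K$ lies in a unit grid cell, and one shows any \emph{unit grid cube fully contained in $K$} has all its corners in $K\cap\ZZ^d$ hence in the discrete hull; the points of $K$ not in such a full cube lie within $\ell_\infty$-distance $\le d$ of $\partial K$, a shell of volume $O(R^{d-1})$ (bounding the surface area of $K$ by that of $[0,R]^d$, which is $O(R^{d-1})$). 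Unscaling gives $\VolX{C}-\VolX{D}=O(1/m)$. The main obstacle — and the step I'd be most careful about — is precisely this claim that the non-covered part of $K$ lies in a bounded-width boundary shell: one must verify that if $x\in K$ is \emph{not} in any unit cube all of whose integer corners lie in $K$, then moving a bounded amount in some coordinate direction exits $K$, which follows because otherwise the unit cube anchored at $\lfloor x\rfloor$ (coordinatewise floor) would be contained in $K$ by convexity once $K$ contains a neighborhood, handled by also absorbing an additive $O(R^{d-1})$ for the degenerate/thin directions as in the first paragraph.
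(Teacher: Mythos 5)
Your final argument (after the false starts) is essentially the paper's proof: a point of $C$ that stays in $C$ when perturbed by $\pm\Theta(d/m)$ along every coordinate axis has its whole grid cell inside $C$, because the coordinate cross-polytope spanned by those $2d$ perturbed points contains an axis-parallel cube of side $\Theta(1/m)$ around the point; hence $C\setminus D$ lies in a shell of width $O(d/m)$ around $\partial C$, whose volume is $O(1/m)$ since the surface area of a convex body in $[0,1]^d$ is at most $2d$. The only quantitative detail you leave implicit is exactly what the paper supplies: $\conv\{p\pm 2(d/m)e_i\}\supseteq p+[-2/m,2/m]^d$, which covers the grid cell containing $p$ and forces $p\in D$.
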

\begin{proof}
    Consider a point $\pnt \in C$, and the cube $\pnt + [-2,2]^d/m$
    centered at $\pnt$ with side length $4/m$. If this cube is contained
    in $C$, then all grid points of the grid cell of $\GridX{m}$ containing
    $\pnt$ are in $C$, and $\pnt$ lies in $D$. Therefore, for every
    point $\pnt\in C\setminus D$, the cube $\pnt + [-2,2]^d/m$ is not contained
    in $C$. By convexity, at least one of the vertices of the cube
    $\pnt + [-2,2]^d/m$ lies outside of $C$. Therefore, the distance
    from $\pnt$ to the boundary of $C$ is at most the distance from $\pnt$ to
    a corner of this cube, which is $2\sqrt{d}/m$.

    It follows that all the points in the corridor $C \setminus D$ are
    at distance at most $2\sqrt{d}/m$ from the boundary of $C$. The volume
    of the boundary of $C$ is bounded from above by the volume of the boundary of
    the unit cube, namely $2d$. As such, the volume of this corridor
    is $\VolX{\bd{C}} 2\sqrt{d}/m \leq (2d) (2\sqrt{d}/m)= O(d^{3/2}/m)$.
    For a fixed $d$, this is $O(1/m)$, as claimed.
\end{proof}

\lemref{discrete:hull} implies that if $\VolX{\Copt} \geq \rho$,
in order to obtain a $(1-\eps$)-approximation,
we can concentrate our search on convex polytopes that have their
vertices at grid points in $\GridX{m}$, where $m = O(1/(\eps \rho))$.
If $\rho$ is a constant, then the maximum volume empty lattice
polytope in $\GridX{m}$ with $m=O(1/\eps)$ is an $(1-\eps)$-approximation
for $\Copt$. However, for arbitrary $\VolX{\Copt} = \Omega(1/n)$,
a much finer grid would be necessary to achieve this approximation.

\subsection{An initial brute force approach}

In this section we present approximation algorithms (for all $d$) relying on
\lemref{discrete:hull} alone, approximating the maximum volume empty
polytope by a lattice polytope in a sufficiently fine lattice (grid).
We shall refine our technique in Subsections~\ref{ssec:refine}
and~\ref{ssec:refine:d}.

For the plane, we take advantage of the existence of an efficient
solution for a related search problem. Refining a natural dynamic programming
approach by Eppstein~\etal~\cite{EORW92} and Fischer~\cite{Fis97},
Bautista-Santiago~\etal~\cite{BDL+11} obtained the following result.

\begin{lemma}[\cite{BDL+11}]\lemlab{2:d:compute:empty}\label{2:d:compute:empty}
    Given a set $\PntSet$ of $m$ points and a set $\PntSetA$ of $O(m)$
    points in the plane, one can compute a convex polygon with the
    largest area with vertices in $\PntSet$ that does not contain any
    point of $\PntSetA$ in its interior in $O(m^3)$ time.
\end{lemma}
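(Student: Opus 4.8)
The plan is to adapt the dynamic-programming scheme of Eppstein~\etal~\cite{EORW92} for the maximum-area empty convex polygon, modified so that ``emptiness'' is tested against the auxiliary set $\PntSetA$ rather than against $\PntSet$ itself. The key structural observation is a \emph{fan decomposition}. Let $P$ be any convex polygon with vertices in $\PntSet$, let $r$ be its lexicographically smallest vertex, and list the remaining vertices $v_1,\dots,v_k$ in counterclockwise angular order around $r$. Then $P=\bigcup_{j=1}^{k-1}\triangle r v_j v_{j+1}$, these triangles have pairwise disjoint interiors, and the interior of $P$ is covered by their interiors together with the open diagonals $r v_j$ ($1<j<k$). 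Hence $P$ contains no point of $\PntSetA$ in its interior if and only if no triangle $\triangle r v_j v_{j+1}$ does -- under the general-position assumption that no point of $\PntSetA$ lies on a line through two points of $\PntSet$; without it one additionally forbids $\PntSetA$-points on the relevant open diagonals, which costs nothing asymptotically. So maximizing $\area(P)$ subject to $\PntSetA$-emptiness is exactly maximizing the total area of a ``legal'' fan of $\PntSetA$-empty triangles sharing the apex $r$.

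First I would fix the apex $r\in\PntSet$ and sort the points of $\PntSet$ that are lexicographically larger than $r$ by polar angle around $r$; all of these points lie in the closed halfplane above $r$, so their angles span less than $\pi$, which is precisely what forces $r$ to be a convex vertex of every polygon it roots. Call an ordered pair $(p,q)$ of such points \emph{valid} if $p$ precedes $q$ in this angular order, $\triangle r p q$ is non-degenerate, and its interior is disjoint from $\PntSetA$. For a valid pair let $E_r(p,q)$ be the maximum area of a convex polygon $r\,u_1\cdots u_t$ ($t\ge 2$) with vertices in $\PntSet$, lexicographically smallest vertex $r$, $u_{t-1}=p$, $u_t=q$, all of whose fan triangles are $\PntSetA$-empty. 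The recurrence is
\[
   E_r(p,q) \;=\; \area\pth{\triangle r p q} \;+\; \max\brc{\,0,\; \max_{o}\; E_r(o,p)\,},
\]
where the inner maximum ranges over all $o$ with $(o,p)$ valid and with $o\to p\to q$ a left turn; the ``$0$'' term accounts for the triangle $r p q$ itself. The maximum-area $\PntSetA$-empty convex polygon with apex $r$ has area $\max_{(p,q)\text{ valid}} E_r(p,q)$, and the answer is the maximum of this over all $r\in\PntSet$ (if no triple is usable, there is no empty polygon). Correctness follows by the usual induction on chain length: every polygon the recurrence produces has left turns at all internal vertices and closes convexly at $r$, so it is convex, and all of its fan triangles are $\PntSetA$-empty; conversely every $\PntSetA$-empty convex polygon, peeled off one vertex at a time starting from the vertex farthest in angle around its lexicographically smallest vertex, is generated.

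For the running time, there are $n$ choices of apex $r$. For each, the angular sort costs $O(n\log n)$, there are $O(n^2)$ candidate pairs $(p,q)$, and testing whether $\triangle r p q$ is $\PntSetA$-empty takes $O(m)$ time by inspecting the $m$ points one by one, so the validity table is built in $O(n^2 m)$ time; the DP then has $O(n^2)$ states with $O(n)$ transitions each, i.e.\ $O(n^3)$. Summing over $r$ gives $O\pth{n^3 m + n^4}$. The variant where $\PntSetA$ is a set of $m$ axis-parallel forbidden rectangles needs no new idea: a triangle--rectangle incidence test is still $O(1)$, and by the same fan-decomposition argument the polygon meets the interior of a forbidden rectangle if and only if one of its fan triangles does (the intersection of two open planar sets, if nonempty, meets the interior of some fan triangle, since the diagonals form a nowhere-dense set).

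The main obstacle is getting the fan-decomposition reduction and the DP invariants to match up exactly: one must be sure that emptiness of the whole polygon is faithfully captured by emptiness of its $O(n^2)$ fan triangles (this is where the low-dimensional boundary cases hide), and that the state and transition enumerate precisely the convex polygons -- no self-overlapping chains slipping through, no empty convex polygon missed. Once these two points are nailed down, the stated time bound is immediate, and the rectangle case is a one-line modification.
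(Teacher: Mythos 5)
Your proposal is correct and is essentially the paper's proof: both triangulate each candidate polygon as a fan from its leftmost (lexicographically smallest) vertex, precompute which of the $O(n^3)$ fan triangles avoid $\PntSetA$ in $O(n^3 m)$ time, and then run a longest-path/dynamic-programming computation over $O(n^4)$ convex-adjacency transitions between triangles sharing an edge and their apex. The paper packages the recurrence as a maximum-weight path in a DAG on allowable triangles, but the states, transitions, and time analysis coincide with yours.
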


\noindent{\bf Remark.}
The algorithm has the same running time if $\PntSetA$ is a set of
$O(m)$ forbidden rectangles.

\smallskip
The combination of Lemmas~\ref{discrete:hull} and \ref{2:d:compute:empty}
readily yields an approximation algorithm for the plane, whose running
time depends on $\VolX{\Copt}$.

\begin{lemma}\lemlab{brute:force} \label{brute:force}
    Given a set $\PntSet \subseteq [0,1]^2$ of $n$ points, such that
    $\VolX{\Copt} \geq \rho$, and a parameter $\eps>0$, one can
    compute an empty convex body $C \subseteq [0,1]^2$ such that
    $\VolX{C} \geq (1-\eps) \VolX{\Copt}$.  The running time of the
    algorithm is $O\pth{ n + 1/(\eps \rho)^6}$.
\end{lemma}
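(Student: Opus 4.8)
The plan is to combine the discrete-hull approximation of \lemref{discrete:hull} with the exact search routine of \lemref{2:d:compute:empty}, applied to a suitably fine grid. First I would set $m = \lceil c/(\eps\rho)\rceil$ for an appropriate constant $c = c(d) = c(2)$ coming from \lemref{discrete:hull}, and take $\PntSetA = \GridX{m}$, the grid point set, as the candidate vertex set for the empty polygon we seek. I would then invoke \lemref{2:d:compute:empty} with the role of $\PntSet$ there played by $\GridX{m}$ (so its ``$n$'' is $|\GridX{m}| = (m+1)^2 = O(1/(\eps\rho)^2)$) and the role of its forbidden set $\PntSetA$ played by the actual input points $\PntSet$ (so its ``$m$'' is $n$). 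This returns, in time $O\!\pth{|\GridX{m}|^3 \cdot n + |\GridX{m}|^4} = O\!\pth{n/(\eps\rho)^6 + 1/(\eps\rho)^8}$, a maximum-area convex polygon $C$ with vertices in $\GridX{m}$ whose interior avoids $\PntSet$. Since $\GridX{m} \subseteq [0,1]^2$ and $C$ is their convex hull-type object, $C \subseteq [0,1]^2$ automatically, and $C$ is empty by construction.

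For correctness of the approximation ratio, consider $\Copt = \Copt(\PntSet)$, the true maximum empty convex body, with $\VolX{\Copt} = \vOptX{\PntSet} \geq \rho$. Its discrete hull $D = \conv(\Copt \cap \GridX{m})$ is a lattice polygon with vertices in $\GridX{m}$; its interior is contained in $\Copt$, hence avoids $\PntSet$, so $D$ is a feasible candidate in the search of \lemref{2:d:compute:empty}. Therefore $\VolX{C} \geq \VolX{D}$. By \lemref{discrete:hull}, $\VolX{\Copt} - \VolX{D} = O(1/m) \leq \eps\rho \leq \eps\,\vOptX{\PntSet}$ once $c$ is chosen to dominate the hidden constant, so $\VolX{C} \geq \VolX{D} \geq \vOptX{\PntSet} - \eps\,\vOptX{\PntSet} = (1-\eps)\vOptX{\PntSet}$, as claimed.

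It remains to reconcile the running time with the stated bound $O\pth{n + 1/(\eps\rho)^8}$. The term $1/(\eps\rho)^8$ already matches the $O(|\GridX{m}|^4)$ contribution. The issue is the $O(n/(\eps\rho)^6)$ term: this is not subsumed by $O(n + 1/(\eps\rho)^8)$ in general (it can be much larger when $n$ and $1/(\eps\rho)$ are comparable). I expect this mismatch to be the main obstacle, and the fix is a standard preprocessing step: before running the search, discard all but one input point in each cell of $\GridX{m}$ (more precisely, for each grid cell record whether it contains any point of $\PntSet$ in its interior and replace $\PntSet$ by one representative per occupied cell, or equivalently by the set of occupied cells treated as forbidden unit rectangles, which \lemref{2:d:compute:empty} explicitly allows). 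This reduces the effective forbidden-set size from $n$ to $O(|\GridX{m}|) = O(1/(\eps\rho)^2)$, so the first term of the running time becomes $O\pth{|\GridX{m}|^3 \cdot |\GridX{m}|^2} = O\pth{1/(\eps\rho)^{10}}$. Hmm --- that overshoots; more carefully, the bucketing pass itself costs $O(n)$ time (one bucket-index computation per point), after which the search costs $O\pth{|\GridX{m}|^4} = O\pth{1/(\eps\rho)^8}$ since the forbidden set now has size $O(|\GridX{m}|)$, making the $n^3m$ term $O(|\GridX{m}|^3 \cdot |\GridX{m}|) = O(|\GridX{m}|^4)$ as well. Adding the $O(n)$ bucketing cost yields the claimed $O\pth{n + 1/(\eps\rho)^8}$. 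I would also note that correctness is unaffected: a convex polygon with vertices in $\GridX{m}$ has a point of $\PntSet$ in its interior iff it has an occupied grid cell overlapping its interior, up to a harmless refinement of $m$ by a constant factor to absorb boundary effects, which only changes the constant $c$.
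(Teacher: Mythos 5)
Your proposal is correct and matches the paper's proof: the paper likewise takes $m=O(1/(\eps\rho))$, uses the $O(m^2)$ grid points as candidate vertices and the $O(m^2)$ occupied grid cells as forbidden rectangles in \lemref{2:d:compute:empty}, and arrives at the $O\pth{n+1/(\eps\rho)^8}$ bound exactly as in your final accounting (your initial attempt with the raw point set as the forbidden set is also the wrong turn the paper avoids). The one imprecision is your claimed ``iff'' between the polygon containing a point of $\PntSet$ and overlapping an occupied cell---only one direction holds, so forbidding cells is strictly stronger---but your fallback of absorbing the extra $O(1/m)$ loss into the constant, arguing as in \lemref{discrete:hull}, is precisely how the paper handles it.
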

\begin{proof}
    Consider the grid $\GridX{m}$ with $m = O(1/(\eps \rho))$.
    By \lemref{discrete:hull} we can restrict our search to a grid polygon.
    Going a step further, we mark all the grid cells containing points of
    $\PntSet$ as forbidden. Arguing as in \lemref{discrete:hull},
    one can show that the area of the largest convex grid polygon avoiding
    the forbidden cells is at least $\VolX{\Copt} - c/m$, where $c$ is
    a constant.

    We now restrict our attention to the task of finding a largest
    polygon. We have a set $\PntSetA$ of $O(m^2)$ grid points that
    might be used as vertices of the grid polygon, and a set of
    $O(m^2)$ grid cells that cannot intersect the interior of the
    computed polygon. By \lemref{2:d:compute:empty}, a largest
    empty polygon can be found in $O(m^6)$ time. Setting
    $m = O(1/(\eps \rho))$, we get an algorithm with overall
    running time $O\pth{ n + 1/(\eps \rho)^6 }$.
\end{proof}

For dimensions $d\geq 3$, we are not aware of any analogue of the
dynamic programming algorithm in \lemref{2:d:compute:empty}.
Instead, we use a brute force approach that enumerates all feasible
subsets of a sufficiently fine grid.

\begin{lemma}\lemlab{brute:force:h:dim} \label{brute:force:h:dim}
    Given a set $\PntSet \subseteq [0,1]^d$ of $n$ points such that
    $\VolX{\Copt} \geq \rho$, and a parameter $\eps>0$, one can
    compute an empty convex body $C \subseteq [0,1]^d$, such that
    $\VolX{C} \geq (1-\eps)\VolX{\Copt}$. The running time of the
    algorithm is $O\pth{ n } + \exp \pth{ O\pth{m^{d(d-1)/(d+1) } \log m}}$,
    where $m = O(1/(\eps \rho))$ and $d$ is fixed.
\end{lemma}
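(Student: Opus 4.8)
The plan is to combine Lemma~\ref{discrete:hull} with a brute-force enumeration over lattice polytopes in a grid of resolution $m = O(1/(\eps\rho))$, and then to bound the number of distinct lattice polytopes we must inspect. First I would set up the grid $\GridX{m}$ with $m = \lceil c/(\eps\rho)\rceil$ for a suitable constant $c=c(d)$. As in the proof of Lemma~\ref{brute:force}, mark every grid cell of $\GridX{m}$ that contains a point of $\PntSet$ as \emph{forbidden}; arguing exactly as in Lemma~\ref{discrete:hull}, the largest convex lattice polytope whose interior avoids all forbidden cells has volume at least $\vOptX{\PntSet} - c'/m \geq (1-\eps)\vOptX{\PntSet}$, once $c$ is chosen large enough relative to $c'$ and using $\vOptX{\PntSet}\geq\rho$. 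So it suffices to find, among all empty (forbidden-cell-avoiding) convex lattice polytopes, one of maximum volume.

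The crux is that we cannot afford to enumerate all subsets of the $\Theta(m^d)$ grid points — that would be $2^{\Theta(m^d)}$. Instead I would invoke the classical bound on the number of vertices of a lattice polytope contained in $\GridX{m}$: by a result of Andrews (see also Bárány–Larman), a convex lattice polytope inside $[0,m]^d$ has at most $O(m^{d(d-1)/(d+1)})$ vertices. Hence every candidate optimal polytope is the convex hull of a set $T$ of at most $t := O(m^{d(d-1)/(d+1)})$ grid points. The algorithm therefore enumerates all subsets $T \subseteq \GridX{m}$ with $|T| \leq t$; there are $\sum_{j\le t}\binom{(m+1)^d}{j} = \exp(O(t\log m))$ of them, since $\log\binom{m^d}{t} = O(t \log(m^d/t)) = O(t\log m)$ for $d$ fixed. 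For each such $T$, compute $P=\conv(T)$, test in polynomial (in $t$ and $m$) time whether any forbidden grid cell meets the interior of $P$, and if not record $\VolX{P}$. Return the best. The per-subset work is $m^{O(1)}$, which is absorbed into the $\exp(O(m^{d(d-1)/(d+1)}\log m))$ factor. Adding the $O(n)$ time to read the input and locate each point in its grid cell gives the claimed running time $O(n) + \exp(O(m^{d(d-1)/(d+1)}\log m))$.

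The main obstacle — and the step that actually carries the content — is the appeal to the lattice-polytope vertex bound $O(m^{d(d-1)/(d+1)})$; without it the enumeration is hopeless, and with it everything else is routine bookkeeping. One subtlety to be careful about: the optimal forbidden-avoiding lattice polytope need not literally be the discrete hull $\conv(\Copt\cap\GridX{m})$, so I would phrase the correctness argument as "there \emph{exists} an empty convex lattice polytope of volume $\geq(1-\eps)\vOptX{\PntSet}$, namely (a sub-polytope of) the discrete hull, hence the maximum over all empty lattice polytopes is at least that much, and the algorithm finds the maximum." A second minor point is that the volume of a given lattice polytope in $\GridX{m}\subset[0,1]^d$ (after rescaling) is a rational with denominator $m^d$ and can be computed exactly by triangulating $T$, so there are no numerical issues in the comparisons.
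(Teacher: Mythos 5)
Your proposal is correct and follows essentially the same route as the paper: reduce to lattice polytopes in $\GridX{m}$ with $m=O(1/(\eps\rho))$ via Lemma~\ref{discrete:hull}, invoke Andrews' bound of $O(m^{d(d-1)/(d+1)})$ on the number of vertices of a convex lattice polytope in $\GridX{m}$, and enumerate all candidate vertex subsets of that size, testing each for emptiness in time polynomial in $m$. Your formulation of the emptiness test (no forbidden cell meets the interior of $P$, rather than the paper's test against the set of vertices of occupied cells) is a harmless, if anything slightly more careful, variant of the same bookkeeping.
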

\begin{proof}
    Consider the grid $\GridX{m}$ with $m = O(1/(\eps \rho))$.
    Let $X$ be the set of vertices of all grid cells of $\GridX{m}$
    that contain some point from $\PntSet$ (\ie, $2^d$ vertices per cell).
    Note that $\cardin{X} = O(m^d)$.
    Andrews~\cite{And63} proved that a convex lattice polytope of volume
    $V$ has $O(V^{(d-1)/(d+1)})$ vertices. Hence a convex lattice
    polytope in $\GridX{m}$ has $O(m^{d(d-1)/(d+1)})$ vertices.
     By the well-known inequality $\sum_{i=0}^k{n\choose i}\leq (\frac{en}{k})^k$,
    the number of subsets of size $O(m^{d(d-1)/(d+1)})$ from $\GridX{m}$ is
    $$\sum_{i=0}^{O(m^{d(d-1)/(d+1)})} {m^d\choose i}
    \leq \left(m^{2d/(d+1)}\right)^{O(m^{d(d-1)/(d+1)})}
    \leq \exp \left( O(m^{d(d-1)/(d+1)} \log m) \right).$$
    For each such candidate subset $G$ of size $O(m^{d(d-1)/(d+1)})$,
    test whether $\conv(G)$ is empty of points from $X$.
    For each point in $X$, the containment test reduces to a linear
    program that can be solved in time polynomial in $m$.
    Returning the subset with the largest hull volume found yields
    the desired approximation. The runtime of the algorithm is
    $\exp \left( O(m^{d(d-1)/(d+1)} \log m) \right).$
\end{proof}

B\'arany and Vershik~\cite{BV92} proved that there are
$\exp \left( O(m^{d(d-1)/(d+1)}) \right)$ convex lattice polytopes in
$\GridX{m}$. If the polytopes can also be enumerated in this time
(as in the planar case~\cite{BP92}), then the runtime
in \lemref{brute:force:h:dim} reduces accordingly.

\subsection{A faster approximation in the plane}
\label{ssec:refine}

If $\Copt$ is long and skinny (e.g., $\rho$ is close to $1/n$), then
the uniform grid $\GridX{m}$ we used in Lemmas~\ref{brute:force} and
\ref{brute:force:h:dim} is unsuitable for finding
a $(1-\eps)$-approximation efficiently. Instead, we employ a rotated and
stretched grid (an affine copy of $\GridX{m}$) that has similar
orientation and aspect ratio as $\Copt$. This overcomes one of the
main difficulties in obtaining a good approximation. Since we do not
know the shape and orientation of $\Copt$, we guess these
parameters via the minimum area rectangle containing $\Copt$.

\begin{lemma}\lemlab{algorithm:2d}%
    Given a set $\PntSet \subseteq [0,1]^2$ of $n$ points such that
    $\VolX{\Copt} \geq \rho$, and a parameter $\eps>0$,
    one can compute an empty convex body $C \subseteq [0,1]^2$ such that
    $\VolX{C} \geq (1-\eps)\VolX{\Copt}$.
   The running time of the algorithm is
$O \pth{ \rho^{-1} \pth{ n +  \rho^{-1} \eps^{-6} }}$.
\end{lemma}
\begin{proof}
   The idea is to first guess a rectangle $R$ that contains
   $\Copt$ such that $\VolX{\Copt}$ is at least a constant
   fraction of the area of $\VolX{R}$, and then to apply
   \lemref{brute:force} to the rectangle $R$ (as the unit square) to
   get the desired approximation.

   Let $B_0$ be the minimum area rectangle (of arbitrary orientation)
   that contains $\Copt$; see \figref{boundingbox}~(left). We guess an
   approximate copy of $B_0$. In particular, we guess the lengths of
   the two sides of $B_0$ (up to a factor of $2$) and the orientation of
   $B_0$ (up to an angle of $O(1/n)$), and then try to position a
   scaled copy of the guessed rectangle so that that it fully contains $\Copt$.

\begin{figure} [htb]
\centerline{\epsfxsize=4.9in \epsffile{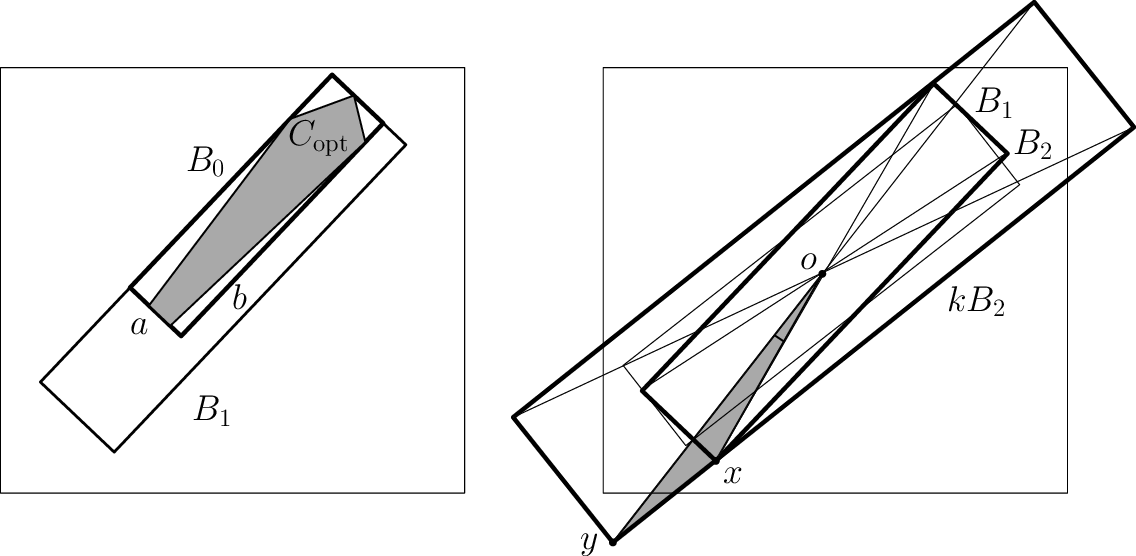}}
    \caption{Left: $\Copt$, a minimum area rectangle $B_0$,
      $\Copt\subseteq B_0$,
    and a minimum area rectangle $B_1$, $B_0\subseteq B_1$, with
    canonical side lengths and the same orientation as $B_0$.
    Right: Rectangle $B_1$, a rotated copy $B_2$ with the closest canonical
    orientation, and a minimum area scaled copy $k B_2$ such that
    $B_1 \subseteq kB_2$.}
    \figlab{boundingbox}
\end{figure}

    Assume for convenience that $n\geq 10$.
    We now show that $\VolX{\Copt}\geq \sqrt{2}/n$, using \thmref{T1}.
    Augment the point set $\PntSet$ with the four corners of the unit square $[0,1]^2$
    into a set of $n+4$ points. By \thmref{T1}, the augmented point set has a Steiner
    convex partition into at most $g_2(n+4) =\lceil \frac{n+4-1}{2}\rceil$ tiles.
    The area of the largest tile is at least that of the average tile
    in this partition, that is,
    $\VolX{\Copt}\geq 1/\lceil \frac{n+4-1}{2}\rceil \geq \frac{2}{n+4}\geq
    \frac{2}{\sqrt{2}n} =\frac{\sqrt2}{n}$, for $n\geq 10$.
    Therefore, we may assume that $\VolX{\Copt}\geq \rho \geq \sqrt{2}/n$.

    Denote by $a$ and $b$ the lengths of the two sides of $B_0$, where $a \leq b$.
    It is clear that $b \leq \sqrt{2}$, the diameter of the unit square.
    We also have $a=\VolX{B_0}/b\geq \VolX{\Copt}/b\geq \sqrt{2}/(bn)
    \geq 1/n$, hence the aspect ratio of $B_0$ is $b/a \leq \sqrt{2}/a
    \leq \sqrt{2}n$.

    Assume now that $2^{i-1}\rho \leq \VolX{\Copt} < 2^i\rho$
    for some $i=1,2,\ldots , \ceil{\log_2(\sqrt{2}n)}$.
    If we want to guess the aspect ratio of $B_0$ up to a factor of two,
    we need to consider only $O(\log \rho^{-1})$ possibilities.
    Indeed, we consider the \emph{canonical aspect ratios} $2^j$ for
    $j=0,\ldots, \ceil{\log_2 (\sqrt{2}/\rho)}-i$,
    and \emph{canonical side lengths} $2^{(i+j)/2}\sqrt{\rho}$ and
    $2^{(i-j)/2}\sqrt{\rho}$.
    Let $B_1$ be a minimum area rectangle with canonical side lengths
    and the same orientation as $B_0$, so that $B_0 \subseteq B_1$.

    The \emph{orientation} of a rectangle is given by the angle between
    one side and the $x$-axis. We approximate the orientation of $B_0$
    by \emph{canonical orientations} $\alpha=r \pi/(5\cdot 2^j)$, for
    $r =0,1,\ldots,5\cdot 2^j-1$. Let $B_2$ be a congruent copy of $B_1$
    rotated clockwise to the nearest canonical orientation about the
    center of $B_1$. We show that $B_1 \subset 2B_2$, \ie, a scaled
    copy of $B_2$ contains $B_1$. Let $k \geq 1$ be the minimum scale
    factor such that $B_1 \subseteq k B_2$. Refer to \figref{boundingbox}~(right).
    Denote by $o$ the common center of $B_1$ and $B_2$, let $x$ be a vertex
    of $B_1$ on the boundary of $kB_2$, and let $y$ be the
    corresponding vertex of $kB_2$. Clearly, $\sin(\angle xoy)\leq
    \pi/(5\cdot 2^j)$ since we rotate by at most $\pi/(5\cdot 2^j)$.
    The aspect ratio of the rectangle $kB_2$ is $\cot (\angle oyx)=2^j$.
    Since $\angle oyx<\pi/4$, we have
    $\sin(\angle oyx) = \tan (\angle oyx) \cos(\angle oyx)
    \geq 2^{-j} \cos\frac{\pi}{4} = 2^{-j-1/2} > \pi/(5\cdot 2^j)$.
    The law of sines yields $|ox|> |xy|$;
    and we have $|ox|+|xy| > |oy|$ by the triangle inequality.
    If follows that $|oy|<2|ox|$, and so $k \leq 2$ suffices.
    Summing over all possible areas, canonical aspect ratios,
    and orientations, the number of possibilities is
    $$\sum_{i=0}^{\ceil{\log_2 (\sqrt{2}/\rho)}} \sum_{j=0}^{\ceil{\log_2 (\sqrt{2}/\rho)}-i} 5\cdot 2^j
    \leq \sum_{i=0}^{\ceil{\log_2 (\sqrt{2}/\rho)}} 10\cdot 2^{\ceil{\log_2 (\sqrt{2}/\rho)}-i}
    \leq 20 \cdot 2^{\ceil{\log_2 (\sqrt{2}/\rho)}} = O(\rho^{-1}).$$

    So far we have guessed the canonical side lengths and orientation
    of $B_2$, however, we do not know its location in the plane. If a
    translated copy $B_2+v$ of $B_2$ intersects $\Copt$, then
    $3B_2+v$ contains it, since $\Copt\subseteq B_0\subseteq B_1\subseteq 2B_2$.
    Consider an arbitrary tiling of the plane with translates of $B_2$.
    By a packing argument, only $O(1/\rho)$ translates
    intersect the unit square $[0,1]^2$. One of these translates,
    say $B_2+v$, intersects $\Copt$, and hence the rectangle $R=3B_2+v$ contains $\Copt$.

    We can apply \lemref{brute:force} to the rectangle $R$ (as the
    unit square) to get the desired approximation. Specifically, let
    $T:\RR^2\rightarrow \RR^2$ be an affine transformation
    that maps $R$ into the unit square $[0,1]^2$, and apply \lemref{brute:force}
    for the point set $T(S\cap R)$ and $T(R\cap [0,1]^2)$. The grid $\GridX{m}$
    clipped in $T(R\cap [0,1]^2)$ corresponds to a stretched and rotated grid
    in $R$; each grid cell of $\GridX{m}$ is stretched to a rectangle with the
    same aspect ratio as $R$. The convex polygon $\Copt$ occupies a constant
    fraction of the area of $R$, and so the resulting running time is
    $O(n_1+1/\eps^6)$, where $n_1$ is the number of points in $R$.
    Note that the algorithm of \lemref{brute:force} partitions $R$
    into a grid with $O(1/\eps^2)$ cells. The approximation algorithm only cares
    about which cells are empty and which are not.

    Since the algorithm of \lemref{brute:force} is repeated for all
    possible positions of $R$,
    the overall running time is
    $O \pth{ \rho^{-1} \pth{ n +  \rho^{-1} \eps^{-6} }}$,
    where the first factor of $\rho^{-1}$ counts possible areas, canonical aspect ratios,
    and orientations, and the second factor of $\rho^{-1}$ inside the
    parenthesis counts possible positions of the rectangle $R$.
\end{proof}

\noindent{\bf Remark.}
If $\rho=\Omega(1)$ the running time of this planar algorithm is linear in $n$.

\medskip
Since $\rho = \Omega(1/n)$, the running time of the algorithm in \lemref{algorithm:2d}
is bounded by $O\pth{\eps^{-6} n^2}$.
We summarize our result for the plane in the following.

\medskip
\noindent
{\bf \thmref{algorithm:2d}.} {\em
    Given a set $\PntSet$ of $n$ points in $[0,1]^2$ and a
    parameter $\eps>0$, one can compute an empty convex body
    $C \subseteq [0,1]^2$, such that $\VolX{C} \geq
    (1-\eps)\VolX{\Copt}$. The running time of the algorithm is
    $O\pth{\eps^{-6} n^2}$.
}

\subsection{A faster approximation in higher dimensions}
\label{ssec:refine:d}

Given a set $\PntSet \subseteq [0,1]^d$ of $n$ points
and a parameter $\eps>0$, we compute an empty
convex body $C \subseteq [0,1]^d$ such that
$\VolX{C} \geq (1-\eps)\VolX{\Copt}$.
Similarly to the algorithm in Subsection~\ref{ssec:refine},
we guess a hyperrectangle $R$ that contains $\Copt$
such that $\VolX{\Copt}$ is at least a constant fraction
of $\VolX{R}$; and then apply \lemref{brute:force:h:dim} to $R$
(as the hypercube) to obtain the desired approximation.

Consider a hyperrectangle $B_0$ of minimum volume (and arbitrary
orientation) that contains $\Copt$. The $d$ edges incident to a vertex of a
hyperrectangle $B$ are pairwise orthogonal. We call these $d$ directions the
\emph{axes} of $B$; and the \emph{orientation} of $B$ is the set of its axes.

We next enumerate all possible discretized hyperrectangles of volume $\Omega(1/n)$,
guessing the lengths of  their axes, their orientations, and their locations
as follows:

Guess the length of every axis up to a factor of 2. Since the minimum length
of an axis in our case is $\Omega(1/n)$ and the maximum is $\sqrt{d}$,
the number of possible lengths to be considered is $O\pth{ \log^d n}$.
Let $B_1$ be a hyperrectangle of minimum volume with canonical side lengths
and the same orientation as $B_0$ such that $B_0\subseteq B_1$.

We can discretize the orientation of a hyperrectangle as follows.
We spread a dense set of points on the sphere of directions,
with angular distance $O(1/n)$ between any point on the
sphere and its closest point in the chosen set. $O(n^{d-1})$
points suffice for this purpose. We try each point as the
direction of the first axis of the hyperrectangle, and then
generate the directions of the remaining axes analogously
in the orthogonal hyperplane for the chosen direction.
Overall, this generates $O(n^{\sum_{i=1}^{d-1}i})=O(n^{d(d-1)/2})$
possibilities.

Successively replace each axis of $B_1$ by an approximate axis
that makes an angle at most $\alpha = 1/(cn)$ with its
corresponding axis, where $c=c(d)$ is a constant depending on $d$.
Let $B_2$ be a congruent copy of $B_1$ obtained in this way.
If $c=c(d)$ is sufficiently small, then $B_1\subseteq 2B_2$.

Consider a tiling of $\RR^d$ with translates of $B_2$. Note that
only $O(1/\VolX{\Copt})=O(n)$ translates intersect the unit cube
$[0,1]^d$. One of these translates $B_2+v$ intersects $\Copt$,
and then the hyperrectangle $R=3B_2+v$ contains $\Copt$.
Since $\Copt(\PntSet)$ takes a constant fraction of the volume of $R$,
we can deploy \lemref{brute:force:h:dim} in this case, and get the desired
$(1-\eps)$-approximation in
$\exp\pth{ O  \pth{\eps^{-d(d-1)/(d+1)}\log \eps^{-1}}}$ time.
Putting everything together, we obtain the following.

\medskip%
\noindent%
{\bf \thmref{empty}} {\em
    Given a set $\PntSet$ of $n$ points in $[0,1]^d$, $d\geq 3$, and
    a parameter $\eps>0$, one can compute an empty convex body
    $C \subseteq [0,1]^d$, such that $\VolX{C} \geq
    (1-\eps)\VolX{\Copt}$. The running time of the algorithm is
   $$\exp\pth{ O  \pth{\eps^{-d(d-1)/(d+1)}\log \eps^{-1}}} n^{1 + d(d-1)/2} \log^d n.$$
}

\paragraph{Remark.}  Consider a set $S$ of $n$ points in
$\RR^d$.  The approximation algorithm we have presented can be
modified to approximate the largest empty tile, \ie, the largest empty
convex body contained in $\conv(S)$, rather than $[0,1]^d$.  The
running time is slightly worse, since we need to take the boundary of
$\conv(S)$ into account. We omit the details.

\section{Conclusions} \seclab{sec:conclusion}

In this section we briefly outline two exact algorithms for finding
the largest area empty convex polygon and the largest area empty
triangle amidst $n$ points in the unit square. At the end we list a
few open problems.

\paragraph{Largest area convex polygon.}
Let $S \subset U=[0,1]^2$, where $|S|=n$. Let $T$ be the set of
four vertices of $U$. Observe that the boundary of an optimal convex
body, $\Copt$, contains at least two points from $S \cup T$.
By convexity, the midpoint of one of these $O(n^2)$ segments
lies in $\Copt$. For each such midpoint $m$,
create a weakly simple polygon $P_m$ by connecting each
point $p \in S$ to the boundary of the square along the
ray $mp$. The polygon $P_m$ has $O(n)$ vertices and is empty of points
from $S$ in its interior. Then apply the algorithm of Chang and
Yap~\cite{CY86} for the potato-peeling problem (mentioned
in~\secref{sec:intro}) in these $O(n^2)$ weakly simple
polygons. The algorithm computes a largest area empty convex
polygon contained in a given (non-convex) polygon with $n$ vertices in
$O(n^7)$ time. Finally, return the largest convex polygon obtained
in this way. The overall running time is $O(n^9)$.

The running time can be reduced to $O(n^8 \log n)$ as follows.
Instead of considering the $O(n^2)$ midpoints, compute a set $P$ of
$O(n \log n)$ points so that every convex set of area at least
$2/(n+4)$ contains at least one of these points. In particular, $\Copt$
contains a point from $P$. The set $P$ can be computed by starting with
a $O(n) \times O(n)$ grid, and then computing an $\eps$-net for it,
where $\eps=O(1/n)$, using discrepancy~\cite{Ma02}. The running
time of this deterministic procedure is roughly $O(n^2)$, and the
running time of the overall algorithm improves to
$O(n^7 \cdot n \log{n})= O(n^8 \log n)$.

\paragraph{Largest area empty triangle.}
The same reduction can be used for finding largest area empty triangle
contained in $U$, resulting in $O(n^2)$ weakly simple polygons $P_m$.
Then the algorithm of Melissaratos and Souvaine~\cite{ms-sphsg-92} for
finding a largest area triangle contained in a polygon is applied to
each of these $O(n^2)$ polygons.
The algorithm finds such a triangle in $O(n^4)$ time, given a
polygon with $n$ vertices. Finally, return the largest triangle
obtained in this way. The overall running time is $O(n^6)$. Via the
$\eps$-net approach (from the previous paragraph) the running time of
the algorithm improves to $O(n^4 \cdot n \log{n})= O(n^5 \log n)$.

\paragraph{Open questions.}
Interesting questions remain open regarding the structure of optimal
Steiner convex partitions and the computational complexity of
computing such partitions. Other questions relate to the problem of
finding the largest empty convex body in the presence of points.

\begin{itemize}

    \item [(1)] Is there a polynomial-time algorithm for computing a
    minimum Steiner convex partition of a given set of $n$ points in
    $\RR^d$? Is there one for points in the plane?

    \item [(2)] Is there a constant-factor approximation algorithm for
    the minimum Steiner convex partition of an arbitrary point set in
    $\RR^d$ (without the general position restriction)? Is there one
    for points in the plane?

    \item [(3)] For $d>2$, the running time of our approximation
    algorithm for the maximum empty polytope has a factor of the form
    $n^{O(d^2)}$.  It seems natural to conjecture that this term can
    be reduced to $n^{O(d)}$. Another issue of interest is extending
    Lemma~\ref{2:d:compute:empty} to higher dimensions for a faster
    overall algorithm.

    \item [(4)] Given $n$ points in $[0,1]^d$, the problem of finding
    the largest convex body in $[0,1]^d$ that contains up to $k$
    (outlier) points naturally suggests itself and appears to be also
    quite challenging.

\end{itemize}

\paragraph{Acknowledgement.}
The authors thank Joe Mitchell for helpful discussions regarding the
exact algorithms in~\secref{sec:conclusion}, in particular for
suggesting the reduction of the maximum-area-empty-convex-body problem
to the potato-peeling problem. Many thanks also go to Sergio Cabello
and Maria Saumell for pointing us to the recent results of
Bautista-Santiago~\etal~\cite{BDL+11} and for suggesting logarithmic factor
improvements in the running time of the approximation algorithm in
Section~\ref{ssec:refine}.

\end{document}